\newtheorem{lemma}{Lemma}
\newtheorem{defn}{Definition}
\newtheorem{definition}[defn]{Definition}
\newtheorem{remark}{Remark}
\newtheorem{example}{Example}
\newcommand{\eps}{\varepsilon}
\newcommand{\R}{\mathbb{R}}
\newcommand{\cD}{\mathcal{D}}
\newcommand{\cI}{\mathcal{I}}
\newcommand{\adat}{\cD_{\text{attr}}}
\newcommand{\bD}{\mathbf{D}}
\newcommand{\C}{\mathcal{C}}
\newcommand{\N}{\mathbb{N}}
\newcommand{\Iids}{\mathcal{I}}
\newcommand{\Cids}{\mathcal{C}}
\newcommand{\attr}{\mathtt{a}}
\newcommand{\pr}{\mathsf{pr}}
\newcommand{\ex}[2]{{\ifx&#1& \mathbb{E} \else
\underset{#1}{\mathbb{E}} \fi \left[#2\right]}}
\newcommand{\var}[2]{{\ifx&#1& \mathsf{Var} \else
\underset{#1}{\mathsf{Var}} \fi \left[#2\right]}}
\DeclareMathOperator*{\argmax}{arg\,max}
\newcommand{\nc}{\newcommand}
\nc{\MS}{\mathcal{S}}
\nc{\MP}{\mathcal{P}}
\nc{\cM}{\mathcal{M}}
\newcommand{\valid}{valid\xspace}
\newcommand{\Valid}{Valid\xspace}
\newcommand{\invalid}{invalid\xspace}
\newcommand{\validity}{validity\xspace}
\newcommand{\Validity}{Validity\xspace}
\newcommand{\budget}{contribution bound\xspace}
\newcommand{\Budget}{Contribution Bound\xspace}
\newcommand{\added}[1]{{#1}}
\definecolor{pw}{HTML}{7977B8}
\definecolor{og}{HTML}{3C8031}
\definecolor{maroon}{HTML}{AF3235}
\definecolor{yo}{HTML}{FAA21A}
\definecolor{mybrick}{RGB}{180,14,15}
\definecolor{Gred}{RGB}{219, 50, 54}
\definecolor{Ggreen}{RGB}{60, 186, 84}
\definecolor{Gblue}{RGB}{72, 133, 237}
\definecolor{Gyellow}{RGB}{247, 178, 16}
\definecolor{ToCgreen}{RGB}{0, 128, 0}
\definecolor{myGold}{RGB}{231,141,20}
\definecolor{myBlue}{rgb}{0.19,0.41,.65}
\definecolor{myPurple}{RGB}{175,0,124}
\begin{document}
\startPage{1}

%%
%% The "title" command has an optional parameter,
%% allowing the author to define a "short title" to be used in page headers.
\title{Differentially Private Ad Conversion Measurement}

%%
%% The "author" command and its associated commands are used to define
%% the authors and their affiliations.
%% \citi{} and \state{} are fully optional.
%% You do not need to provide values for the \city{}, \state{}, and \country{} commands 
%% within the \author{} commands, but YOU MUST leave the blank commands in the file.
\author{John Delaney}
\affiliation{%
  \institution{Google}
  \city{} % optional
  \state{MA} % optional
  \country{USA} % optional, but if you prefer not to include it in your CR, please leave it as \country{}
}
\email{johnidel@google.com}

\author{Badih Ghazi}
\affiliation{%
  \institution{Google}
  \city{} % optional
  \state{CA} % optional
  \country{USA} % optional, but if you prefer not to include it in your CR, please leave it as \country{}
}
\email{badihghazi@gmail.com}

\author{Charlie Harrison}
\affiliation{%
  \institution{Google}
  \city{} % optional
  \state{TX} % optional
  \country{USA} % optional, but if you prefer not to include it in your CR, please leave it as \country{}
}
\email{csharrison@google.com}

\author{Christina Ilvento}
\affiliation{%
  \institution{Google}
  \city{} % optional
  \state{CA} % optional
  \country{USA} % optional, but if you prefer not to include it in your CR, please leave it as \country{}
}
\email{cilvento@google.com}

\author{Ravi Kumar}
\affiliation{%
  \institution{Google}
  \city{} % optional
  \state{CA} % optional
  \country{USA} % optional, but if you prefer not to include it in your CR, please leave it as \country{}
}
\email{ravi.k53@gmail.com}

\author{Pasin Manurangsi}
\affiliation{%
  \institution{Google}
  \city{} % optional
  \state{} % optional
  \country{Thailand} % optional, but if you prefer not to include it in your CR, please leave it as \country{}
}
\email{pasin@google.com}

\author{Martin P{\'{a}}l}
\affiliation{%
  \institution{Google}
  \city{} % optional
  \state{} % optional
  \country{Switzerland} % optional, but if you prefer not to include it in your CR, please leave it as \country{}
}
\email{mpal@google.com}

\author{Karthik Prabhakar}
\affiliation{%
  \institution{Google}
  \city{} % optional
  \state{CA} % optional
  \country{USA} % optional, but if you prefer not to include it in your CR, please leave it as \country{}
}
\email{kprabhakar@google.com}

\author{Mariana Raykova}
\affiliation{%
  \institution{Google}
  \city{} % optional
  \state{NY} % optional
  \country{USA} % optional, but if you prefer not to include it in your CR, please leave it as \country{}
}
\email{marianar@google.com}

%%
%% By default, the full list of authors will be used in the page
%% headers. Often, this list is too long, and will overlap
%% other information printed in the page headers. This command allows
%% the author to define a more concise list
%% of authors' names for this purpose.
\renewcommand{\shortauthors}{Delaney et al.}

%%
%% The abstract is a short summary of the work to be presented in the
%% article.
\begin{abstract}
  In this work, we study \emph{ad conversion measurement}, a central functionality in digital advertising, where an advertiser seeks to estimate advertiser website (or mobile app) conversions attributed to ad impressions that users have interacted with on various publisher websites (or mobile apps).  Using differential privacy (DP), a notion that has gained in popularity due to its strong mathematical guarantees, we develop a formal framework for private ad conversion measurement. In particular, we define the notion of an operationally valid configuration of the attribution rule, DP adjacency relation, contribution bounding scope and enforcement point. We then provide, for the set of configurations that most commonly arises in practice, a complete characterization, which uncovers a delicate interplay between attribution and privacy.
\end{abstract}

%%
%% Keywords. The author(s) should pick words that accurately describe
%% the work being presented. Separate the keywords with commas.
\keywords{differential privacy, online advertisement, measurement}

\maketitle
\section{Introduction}

Over the last two decades, numerous attacks have illustrated the privacy risks associated with the release of (aggregated, de-identified) information, across various domains \cite{sweeney2015only, narayanan2008robust, homer2008resolving}. This has led to the introduction of differential privacy (DP) \cite{dwork06calibrating,dwork2006our}, a rigorous mathematical notion that quantifies the privacy loss of users against arbitrary adversaries observing the algorithm's output (and regardless of the data contributed by other users). While DP has been deployed in several fields including census data release \cite{abowd2018us}, learning frequently typed words \cite{dp2017learning}, and collection of telemetry data \cite{ding2017collecting}, there has not been any formal study of it for ad conversion measurement---a core functionality in the digital advertising space---that we undertake in this work.

We now provide a quick overview of the problem space.  Let the term \emph{conversion} refer to a valuable action (e.g., a purchase, add-to-cart, newsletter sign up, etc.) on the advertiser website\footnote{Although in this paper we mostly discuss advertiser and publisher \emph{websites}, these could alternatively be \emph{mobile apps}, and the same treatment holds.} and the term \emph{impression} refer to an ad engagement (e.g., a click on an ad or a view of an ad) by the user on the publisher website (and not merely an ad fetch).
In \emph{(ad) conversion measurement}, an advertiser running campaigns on multiple publishers aims to measure the performance of various campaign slices.\footnote{A campaign consists of a subset of impressions, associated with the same campaign ID. Eeach impression is also associated with attributes over which an analyst can slice.} This includes estimating conversion counts and values (the latter can be numerical, e.g., in US dollars) for different settings of impression and conversion features (e.g., Example~\ref{ex:conv_counts_values} below).\  This information can then be used by the advertiser to estimate the conversion rate and the return on ad spend, to optimize the purchase of ad impressions on guaranteed selling channels, and/or to power real-time bidding systems \cite{choi2020online}.
The ease of measuring conversions on advertiser websites (or apps) and attributing them to impressions that the same user had interacted with on publisher websites (or apps) is an important reason why online advertising is significantly more efficient than more traditional forms of advertising (e.g., print media, radio and television).

Privacy is a crucial consideration in conversion measurement. Ad measurement is indeed fundamentally a cross-site functionality, involving multiple publishers and advertisers. For the last two decades, non-private approaches to the problem have allowed third parties to track users across websites; this has been enabled on the Web by third-party cookie technology. However, in recent years, a consensus has emerged that these approaches are too invasive for users, and that new privacy-preserving methods for supporting various ad use cases are critically needed. This has led to the decision to deprecate third-party cookies by several browsers including Apple Safari \cite{safari},  Mozilla Firefox \cite{mozilla}, and Google Chrome \cite{chromium}. Consequently, multiple efforts are currently underway by many browsers, platforms and industry groups to design privacy-preserving APIs that aim to support ad conversion measurement functionalities. \footnote{We note that, on a high-level, the goals of ad conversion measurement are not at odds with privacy; in fact, advertisers typically want to change their ad spend allocation based on patterns that are significant enough to generalize to large groups of users (as opposed to overfitting to individual user actions).} These APIs and proposals include the Interoperable Private Attribution (IPA) proposed by Mozilla and Meta, \cite{mozillaattribution}, Masked LARk from Microsoft \cite{pfeiffer2021masked}, the Privacy Sandbox Attribution Reporting API (ARA) on Chrome \cite{chromeattribution} and Android \cite{androidattribution}, Private Click Measurement (PCM) on Safari \cite{pcm-safari}, SKAdNetwork on iOS \cite{SKAdNetwork} as well as the recent proposal \cite{apple-prio-like} from Apple. While most of these efforts seek to guarantee DP in order to ensure that sensitive user information cannot be recovered from the output of the API, they still lack an end-to-end formal DP  framework. The goal of our work is to develop such a framework, so that proposals in this space are built on a solid mathematical foundation.

\section{Motivation, Setup \& Contributions}

\subsection{Ad Conversion Measurement System}\label{subsec:ad_conv_meas}

We next define the main components of an ad conversion measurement system, and discuss the central concept of \emph{attribution}.

\subsubsection{Basic Definitions.}\label{sec:basic_defs}
An \emph{impression} event consists of (i) a timestamp, (ii) a publisher id, (iii) an advertiser id, (iv) a user id, and (v) metadata associated with the impression (e.g., type indicating if it is a click or view, format indicating size of ad, etc.).\\ 
% The set of all impressions is denoted by $\Iids$.
A \emph{conversion} event consists of (i) a timestamp, (ii) an advertiser id, (iii) a user id, and (iv) metadata associated with the conversion (e.g., a type indicating if it is a purchase or a sign up, a value if it is a purchase etc).\\ 
% The set of all conversions is denoted by $\Cids$.
The entities that are involved in ad conversion measurement are:
\begin{itemize}
\item \emph{Publisher}: a website on which ad impressions take place.
\item \emph{Advertiser}: a website where conversion events take place. 
\item \emph{Ad tech:} entity (often a third party) used by an advertiser to buy, manage, and measure their digital advertising.
\end{itemize}
The most common type of functionality in ad conversion measurement can be thought of as a two-step process: first, conversions are assigned to one or more impressions using a fixed \emph{attribution rule}, and then queries are executed on the resulting \emph{attributed dataset}. On a high level, the attribution rule determines how the credit from a conversion is to be divided over the different ad impressions corresponding to the same advertiser and the same user as the conversion. The attributed dataset consists of (impression, conversion) pairs, each corresponding to an attribution; optionally the pair is also associated with a (fractional) credit.  (In the cases where all credits are equal, we omit them from the attributed dataset notation for simplicity.)  
%The attribution rule used in the first step is described in Section~\ref{subsubsec:attr_logic}. 
The queries used in the second step include counting the number of conversions attributed to a subset of impressions (and the corresponding conversion rate), as well as the total return on ad spend for that subset.

\begin{example}\label{ex:conv_counts_values}
An example of the set of impression features can include the publisher website, the advertiser website, the ad engagement type (click or view), as well as its time and geographical location. An example set of conversion features can include the advertiser website, the conversion time, and the conversion type (e.g., add-to-cart, purchase, email sign-up). Thus, examples of ad conversion measurement queries include asking for:
\begin{itemize}
    \item The number of conversions on advertiser1.com attributed to ad impressions on publisher1.com which occurred in the UK.
    \item The total value (in US dollars) of purchases on advertiser1.com taking place on August 31st and attributed to ad impressions on publisher1.com.
\end{itemize},
\end{example}

% In this work, we consider the setting where a single ad tech would like to run DP queries on a conversion measurement dataset pertaining to multiple publishers and advertisers. Our treatment directly extends to the case of multiple non-colluding ad techs.  (We leave the study of multiple potentially colluding ad techs for future work; see Section~\ref{conc_future_directions}).

We next discuss the attribution step in more details.

\begin{figure}[ht]
\centering
\footnotesize
\includegraphics[width=0.5\textwidth]{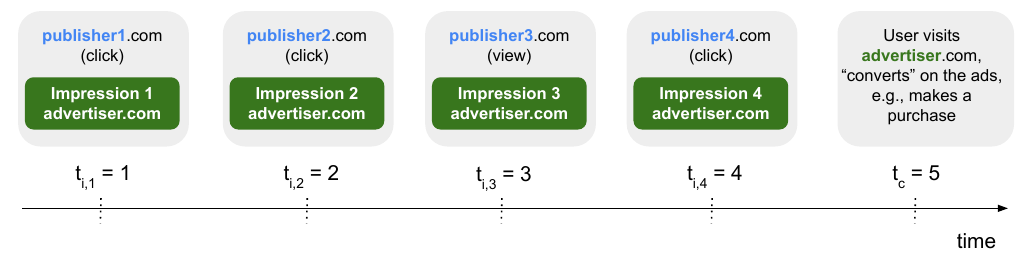}
\caption{Example Attribution Path. In this case, a user interacts with four ad impressions from the same advertiser, but on four different publishers. The third of these interactions is a view, whereas the others are clicks. For simplicity, we assume that the four impressions and the subsequent conversion occur at equally spaced times.}
\label{fig:multi_pub_attribution_path}
\end{figure}

\subsubsection{Attribution Rule}
A key component of any conversion measurement system is the attribution rule. Specifically, consider the setting, illustrated in Figure~\ref{fig:multi_pub_attribution_path}, where a user is exposed to several impressions on multiple publishers before converting on the advertiser website. Which of these impressions should get the credit for driving the conversion? This attribution question has been the subject of numerous studies (see, e.g., \cite{kannan2016path} and the references therein). In practice, the most popular attribution rules include last-touch attribution (LTA), first-touch attribution (FTA), uniform attribution (UNI), and exponential time decayed attribution (EXP). As the names suggest, first- and last-touch attribution assign all the credit to the first and last impressions on the attribution path, respectively, whereas uniform attribution divides the credit evenly among all the impressions, and exponential time decayed attribution assigns to each impression a credit that decays exponentially as a function of the time gap to the conversion. For an illustration 
%\footnote{\textcolor{blue}{For the EXP attribution rule, we assume that the half-life parameter is:
%$t_{1/2} = 1$, and that the impression and conversion times are $t_{i,1} = 1$, $t_{i,2} = 2$, $t_{i,3} = 3$, $t_{i,4} = 4$, and $t_c = 5$.}} 
of these different attribution rules for the example in Figure~\ref{fig:multi_pub_attribution_path}, we refer the reader to Table~\ref{fig:example_credit_division}.

\iffalse
\begin{table}[htb]
\small
\begin{tabular}{|c|c|c|c|c|}
\hline
{\bf Attribution} & Click on & Click on
& View on & Click on \\
{\bf Rule} & publisher1.com
& publisher2.com & publisher3.com & publisher4.com \\
\hline
LTA & 0 & 0 & 0 & 1 \\
FTA & 1 & 0 & 0 & 0 \\
UNI & 0.25 & 0.25 & 0.25 & 0.25 \\
EXP & 0.0667 & 0.1333 & 0.2667 & 0.5333 \\
\hline
\end{tabular}
\end{table}
\fi

% \begin{figure}[H]
% \centering
% \footnotesize
% \includegraphics[width=0.48\textwidth]{Attribution Credit Example Table.pdf}
% \caption{Example Credit Division. For LTA, all the credit goes to the ``last touch'', which in this case is the user click that took place on publisher4.com. For FTA, all the credit goes to the ``first touch'', which is the user click on publisher1.com. For UNI, the credit is split equally over the $4$ ad impressions ($3$ clicks and $1$ view). For the EXP attribution rule, we assume that the half-life parameter is $1$, and that the times of the successive impressions and conversion are all separated by $1$ unit, as shown in Figure~\ref{fig:multi_pub_attribution_path}; thus, the credit assigned to ad impression $i$ is $0.5^{5-i}/(0.5 + 0.5^2 + 0.5^3 + 0.5^4)$ for each $i \in \{1, 2, 3, 4\}$.}
% \label{fig:example_credit_division-_old}
% \end{figure}

{\small
\begin{table}[H]
\begin{center}
\begin{tabular}{|c|c|c|c|c|} 
%   \toprule
  \hline
  {\bf Attribution} & {\bf Credit to}  & {\bf Credit to} & {\bf Credit to} & {\bf Credit to}\\ 
  {\bf Rule}  & {\bf Click on} & {\bf Click on} & {\bf View on} & {\bf Click on} \\ 
    & {\bf publisher1} & {\bf publisher2} & {\bf publisher3} & {\bf publisher4} \\   
  \hline
  LTA & $0$ & $0$ & $0$ & $1$ \\ 
  FTA & $1$ & $0$ & $0$ & $0$ \\ 
  UNI & $0.25$ & $0.25$ & $0.25$ & $0.25$ \\ 
  EXP & $0.0667$ & $0.1333$ & $0.2667$ & $0.5333$ \\ 
  \hline
%   \bottomrule
\end{tabular}
\end{center}
\caption{Example Credit Attribution.}\label{fig:example_credit_division}
\end{table}
}
We next explain how the values in Table~\ref{fig:example_credit_division} were computed. For LTA, all the credit goes to the ``last touch'', which is the user click that took place on publisher4.com. For FTA, all the credit goes to the ``first touch'', which is the user click on publisher1.com. For UNI, the credit is split equally among the $4$ ad impressions ($3$ clicks and $1$ view). For the EXP attribution rule, we assume that the half-life parameter\footnote{We refer the reader to Section~\ref{subsubsec:attr_logic} for the formal definition.} is $1$, and that the times of the successive impressions and conversion are all separated by $1$ time unit, as shown in Figure~\ref{fig:multi_pub_attribution_path}; thus, the credit assigned to ad impression $i$ is $0.5^{5-i}/(0.5 + 0.5^2 + 0.5^3 + 0.5^4)$ for each $i \in \{1, 2, 3, 4\}$.

In practice, different ad techs might offer advertisers support different attribution rules to be used when measuring conversions. For instance, while FTA might be desirable for some advertised products, LTA might be natural in other settings. In this work, we consider all of the basic and most popular attribution rules that were mentioned above.\footnote{In addition to the aforementioned attribution rules, we also cover position-based, U-shaped, and impression-priority attribution; see Section~\ref{subsubsec:attr_logic} for the formal definitions.}

\subsection{Privacy Model}\label{subsec:privacy_model}

We start by describing the threat model for privacy-preserving ad conversion measurement.
\subsubsection{Threat Model}

To illustrate the threat model, we consider the setting where an untrusted third party (e.g., an ad tech as defined in Section~\ref{sec:basic_defs}) would like to run  queries on a conversion measurement dataset pertaining to multiple publishers and advertisers.
We assume that a central trusted curator is in charge of executing the query on the dataset and sending the output to the ad tech; the trusted curator could, e.g., be a web browser or mobile platform, a trusted third party, or a secure multi-party computation protocol. The goal is to protect sensitive information (e.g., related to a single impression, conversion, or user?) from being leaked to the ad tech through the output of the protocol. Examples of sensitive information in this setting include app or web browsing history (e.g., did the user recently visit a sensitive website), or the shopping history of the user. To protect the data of individual users from leaking through an output released to an untrusted third party, DP has become the gold standard. It has been suggested as a primary privacy guardrail in multiple industry proposals for privacy-preserving ad measurement systems. Therefore, in this work, we study DP as the desired privacy guarantee on the output of the ad measurement system.
%Our treatment directly extends to the case of multiple non-colluding ad techs querying the system. (We leave the study of multiple \emph{potentially colluding} ad techs for future work; see Section~\ref{conc_future_directions}).
%Thus, the central curator seeks to protect the privacy of the users from the ad tech. (Although we focus on this paper on the central setting, our treatment can be extended to other models of privacy; see Section~\ref{conc_future_directions}) 
\begin{remark} (Robustness to Side Information)
We note that in some cases, depending on browser constraints, user sign-ins, and business arrangements, the third party could have a prior partial view of the dataset.  E.g., it could know the set of all impressions (across all users but without knowing the associated user id for each impression), or the set of all conversions, or both. Nevertheless, the protection offered by DP would still be meaningful in this setting, since DP is robust to the presence of side information.
\end{remark}
%Moreover, the user ID which is part of the impression (respectively, conversion) could be the first-party ID of the user on the publisher (respectively, advertiser) site. The matching of the user IDs between impressions and conversions is a priori only known to the browser or operating system, and is deemed sensitive. We seek to prevent the ad tech from learning it by querying the system.

\subsubsection{Differential Privacy Ingredients}

\paragraph{Adjacency Relation}
%On a high level, DP dictates that the distribution of the output of a (randomized) algorithm remains statistically indistinguishable if a single row of the dataset is modified (see Section~\ref{sec:dp_def} for a formal definition).
On a high level, DP dictates that the distributions of the output of a (randomized) algorithm on two \emph{adjacent} conversion measurement datasets are statistically indistinguishable (see Section~\ref{sec:dp_def} for a formal definition).
It is thus necessary to specify the adjacency relation (a.k.a. privacy unit) to which the DP definition applies. Due to the highly fragmented nature of conversion measurement datasets (with the conversion taking place on one of the advertisers and the impressions taking place on different publishers), it turns out there are multiple natural alternatives for defining the adjacency relation, with subtle implications on the privacy-utility trade-offs. The different possibilities are listed in Table~\ref{fig:adj_table}. For each adjacency relation, the allowed difference between two adjacent datasets consists of all user engagements that only belong to a single tuple in the adjacency relation. For instance, for the \textsl{user $\times$ advertiser} relation, two adjacent datasets can differ on the set of all impressions of Alice associated with advertiser1.com (and shown on any publisher), along with all conversions of Alice on advertiser1.com; this is because all of these engagements are only associated with a single (user, advertiser) tuple, namely, (Alice, advertiser1.com). On the other hand, for the \textsl{user $\times$ publisher $\times$ advertiser} relation, the difference consists of all impressions associated with a fixed publisher and a fixed advertiser (e.g., all impressions of Alice shown on publisher1.com, and that are associated with advertiser1.com); note that the conversions associated with the advertiser (e.g., advertiser1.com) are not included here, as they can be associated with multiple other publishers (e.g., publisher2.com), and are thus also related to other (user, publisher, advertiser) tuples in the adjacency relation, e.g., (Alice, publisher2.com, advertiser1.com) is such a tuple.
\begin{remark}[Intuitive Interpretation of the Different Adjacency Relations]
The different adjacency relations described in Table~\ref{fig:adj_table} offer a spectrum of possible privacy guarantees. On one end, the \textsl{conversion} (respectively, \textsl{impression}) adjacency relation seeks to protect a single conversion (respectively, impression) from being leaked by the output of the algorithm. On the other end, the \textsl{user} relation protects any information related to all the impressions and conversions of any user from being leaked. The other notions can be seen as interpolating between these two ends. E.g., the \textsl{user $\times$ advertiser} relation protects all the user’s impressions and conversions pertaining to a single advertiser, but it does not necessarily protect information that can be deduced by observing the user’s impressions and conversions across multiple advertisers. In particular, under this \textsl{user $\times$ advertiser} relation, observing the output of the privacy-preserving ad conversion measurement the system would not substantially increase an attacker's ability to distinguish whether Alice had any attributed conversions associated with advertiser1.com. The choice of the adjacency relation depends on the privacy protection that the system designer seeks to guarantee, and the associated utility trade-offs that they are willing to accept. E.g., the \textsl{user $\times$ advertiser} adjacency can be natural in some settings as it would prevent the ad-tech associated with a publisher from learning that a visitor to the publisher later visited a (sensitive) advertiser site. On the other hand, a \textsl{user $\times$ publisher} adjacency relation can prevent an ad-tech associated with the advertiser from learning about the actions of a converting user on publisher sites.
\end{remark}

\begin{remark}[Time Dimension in Adjacency Relations]
    In practice, it is common to include the time dimension in some adjacency relations, in particular, the \textsl{user}, \textsl{user $\times$ publisher} and \textsl{user $\times$ advertiser} ones. Moreover, DP composition can be used to handle the case where the time dimension in the adjacency relation can cover multiple releases of the output of the system on different subsets of the dataset.
\end{remark}

% The possibilities include impression privacy, where a row of the dataset consists of one impression, conversion privacy where the row consists of a single conversion, publisher $\times$ user privacy where the row corresponds of all the impressions of a user on a single publisher, and user privacy where a row consists of all impressions and conversions corresponding to a single user over all publishers and the advertiser (see Section~\ref{sec:adjacency_and_privacy_scope} for more details).

\begin{figure*}[t]
\centering
\footnotesize
\includegraphics[width=1\textwidth]{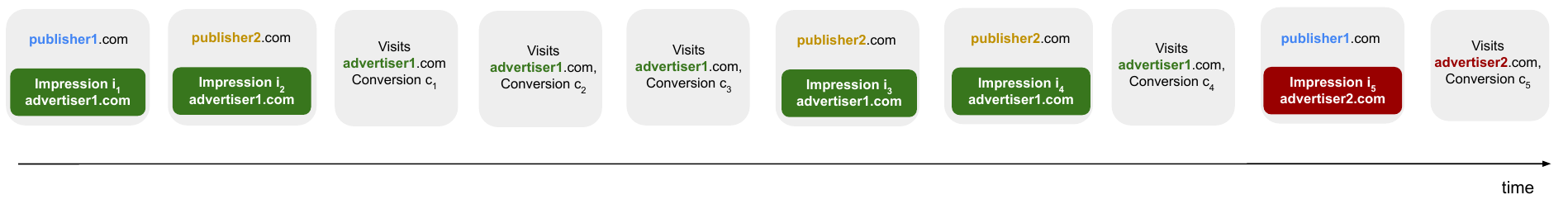}
\caption{Attribution Path for Multiple Publishers and Advertisers. In this example, the user interacts with ads on two publishers, and converts on the two corresponding advertiser websites.}
\label{fig:mult_publishers_advertisers_contribution_capping}
\end{figure*}

% \paragraph{\Budget Scope}
\paragraph{Contribution Bounding Scope}
To ensure privacy for a given adjacency relation, a core component of DP systems is the notion of a \budget. Note that, in the case of ad conversion measurement, the number of interactions can naturally be unbounded. For example:
\begin{itemize}
    \item An impression could lead to multiple conversions.
    \item Different impressions can be on the attribution path of the same conversion.
    \item The same user can be shown many impressions on the same or on different publishers, and could convert multiple times on the advertiser.
\end{itemize}
See Figure~\ref{fig:mult_publishers_advertisers_contribution_capping} for an example. To be able to guarantee a bounded privacy leakage, the contributions of the interactions to the computed function should be restricted by a certain \emph{\budget}.
% \footnote{We note that throughout this work we use the term "privacy budget" to refer to the budget enforced on the contributions of different user interactions (and not to the $\epsilon$ DP parameter value).}
The \emph{contribution bounding scope} is the set of user interactions that share the same contribution bound. The scope can be any one of the options listed in Table~\ref{fig:adj_table} for the adjacency relation. We will consider in this work the natural setting where the contribution bounding scope is the same as the adjacency relation.
%There are however multiple possible scopes for the \budget:
% per impression, per conversion, per (publisher, user) and per user (see Section~\ref{sec:adjacency_and_privacy_scope} for more details).
It turns out that each adjacency relation / contribution bounding scope can have a different interplay with the attribution rule, and with the resulting DP guarantee. Moreover, as we will see shortly, some will be easier to operationalize than others.

\begin{table*}[h!]
\begin{center}
\begin{tabular}{ | m{4.15cm} |m{9cm}| } 
%   \toprule
  \hline
  {\bf Adjacency Relation} & {\bf Difference between Adjacent Datasets} \\ 
  \hline
  Impression & A single impression \\ 
  \hline
  Conversion & A single conversion \\ 
  \hline
  User $\times$ Publisher & All impressions shown to a user on a given publisher \\ 
  \hline
  User $\times$ Advertiser & All impressions shown to a user and for a given advertiser, and all conversions by the same user on the same advertiser \\   
  \hline
  User $\times$ Publisher $\times$ Advertiser & All impressions shown to a user on a given publisher and corresponding to a given advertiser  \\     
  \hline
  User & All impressions shown on all publishers, and all conversions occurring on all advertisers, for a given user \\ 
  \hline
%   \bottomrule
\end{tabular}
\end{center}
\caption{DP Adjacency Relations.}\label{fig:adj_table}
\end{table*}

\paragraph{\Budget Enforcement}
One particular choice, that turns out to be important, is whether the \budget should be enforced before (\emph{pre-attribution contribution capping}) or after (\emph{post-attribution contribution capping})  the attribution rule is applied to the dataset. In the former  case, only impressions belonging to a scope with a non-zero remaining \budget can be considered on the attribution path of a conversion. In the latter  case, a conversion could get attributed to an impression with a remaining \budget of $0$, only to get discarded at the \budget enforcement step (without the attribution falling back to any other impression on the path). In other words, pre-attribution \budget enforcement limits the number of impressions or conversions that belong to the contribution bounding scope and that can \emph{enter} the attribution rule. By contrast, post-attribution enforcement limits the number of post-attribution (impression, conversion) pairs that belong to the contribution bounding scope.\footnote{In the case of multi-touch attribution, each (impression, conversion) in the attributed dataset is weighted, and when enforced post-attribution, the \budget in fact limits the \emph{total weight} of the pairs that belong to the contribution bounding scope.} See Table~\ref{tab:post_attribution_capping_example} for the attributed dataset resulting from applying post-attribution contribution bounding with a bound of $2$ to the dataset given in Figure~\ref{fig:mult_publishers_advertisers_contribution_capping} and with the LTA rule. 

\begin{table}[H]
\begin{tabular}{|c|c|}
\hline
{\bf Contribution Bounding} & {\bf Attributed Dataset} \\
\hline
& ($i_2$, $c_1$) \\
& ($i_2$, $c_2$) \\
None & ($i_2$, $c_3$) \\
& ($i_4$, $c_4$) \\
& ($i_5$, $c_5$) \\
\hline
& ($i_2$, $c_1$) \\
Impression & ($i_2$, $c_2$) \\
(\Budget = 2) & ($i_4$, $c_4$) \\
& ($i_5$, $c_5$) \\
\hline
User $\times$ Advertiser
& ($i_2$, $c_1$) \\
(\Budget = 2) & ($i_2$, $c_2$) \\
& ($i_5$, $c_5$) \\
\hline
User & ($i_2$, $c_1$) \\
(\Budget = 2) & ($i_2$, $c_2$) \\
\hline
\end{tabular}
\caption{Post-Attribution Contribution Bounding. The input dataset is shown in Figure~\ref{fig:mult_publishers_advertisers_contribution_capping}. The LTA rule was applied.}
\label{tab:post_attribution_capping_example}
\end{table}

We now explain how the attributed datasets were generated in Table~\ref{tab:post_attribution_capping_example}. First, note that in the absence of any contribution bounding, each conversion should simply be attributed to the last impression occurring prior to it. So in the dataset given in Figure~\ref{fig:mult_publishers_advertisers_contribution_capping}, conversions $c_1$, $c_2$ and $c_3$ should be attributed to impression $i_2$, conversion $c_4$ should be attributed to impression $i_4$, and conversion $c_5$ should be attributed to impression $i_5$. This explains the first row of Table~\ref{tab:post_attribution_capping_example}. We note that post-attribution contribution bounding with a per conversion contribution bounding scope is a no-op, so it would result in the same attributed dataset as the first (i.e., ``None'') row in Table~\ref{tab:post_attribution_capping_example}. In second row of Table~\ref{tab:post_attribution_capping_example}, a post-attribution contribution bound of $2$ is applied for each impression. This leads to dropping the pair $(i_2, c_3)$ from the attributed dataset because conversions $c_1$ and $c_2$ are already attributed to impression $i_2$. In the third row of Table~\ref{tab:post_attribution_capping_example}, a contribution bound of $2$ is applied post-attribution to each (user, advertiser) pair. Since Figure~\ref{fig:mult_publishers_advertisers_contribution_capping} specifies a dataset for a single user and for two advertisers (namely, advertiser1.com and advertiser2.com), this entails capping the number of attributed conversions for of each of the two advertisers to $2$. Compared to the second row, this has the additional effect of dropping the pair $(i_4, c_4)$ from the attributed dataset, since conversions $c_1$ and $c_2$ occur on the same advertiser and already appear in the attributed dataset. Finally, in the last row of Table~\ref{tab:post_attribution_capping_example}, a contribution bound of $2$ is applied post-attribution for each user. For the user whose dataset is shown in Figure~\ref{fig:mult_publishers_advertisers_contribution_capping}, this implies that the total number of conversions (across all advertisers) appearing in the attributed dataset should be bounded to at most $2$. Compared to the third row of Table~\ref{tab:post_attribution_capping_example}, this has the further effect of dropping the pair $(i_5, c_5)$ from the attributed dataset, because the same user already has two conversions, $c_1$ and $c_2$, appearing in the attributed dataset (despite the fact that these conversions occur on a different advertiser).

\iffalse
\begin{figure}[htb]
\centering
\footnotesize
\includegraphics[width=0.4\textwidth]{Post-Attribution Capping Example.pdf}
\caption{\textcolor{blue}{Attributed Dataset with Post-Attribution Capping.}}
\label{fig:post_attribution_capping_example}
\end{figure}
\fi

For the results for pre-attribution contribution bounding, see Table~\ref{tab:pre_attribution_capping_example}. 

\begin{table}[ht]
\begin{tabular}{|c|c|}
\hline
{\bf Contribution Bounding} & {\bf Attributed Dataset} \\
\hline
User $\times$ Advertiser & ($i_5$, $c_5$) \\  (\Budget = 2) & \\ 
\hline
User & Empty \\ 
(\Budget = 2) & \\
\hline
\end{tabular}
\caption{Pre-Attribution Contribution Bounding. The input dataset is shown in Figure~\ref{fig:mult_publishers_advertisers_contribution_capping}. The LTA rule was applied.}
\label{tab:pre_attribution_capping_example}
\end{table}

We next explain how the attributed datasets were generated in Table~\ref{tab:pre_attribution_capping_example}. In the first row, a contribution bound of $2$ is applied \emph{pre-attribution} for each (user, advertiser) pair. For the single user corresponding to Figure~\ref{fig:mult_publishers_advertisers_contribution_capping}, the first advertiser will have its contribution bound exhausted after impressions $i_1$ and $i_2$ are processed; hence, no conversion occurring on the first advertiser will appear in the attributed dataset. By contrast, for the second advertiser, impression $i_5$ and and conversion $c_5$ will be processed, and the resulting pair $(i_5, c_5)$ will be added to the attributed dataset (and at that point the contribution bound for the second advertiser would be exhausted). For the last row of Table~\ref{tab:pre_attribution_capping_example}, the contribution bound of $2$ is enforced pre-attribution at the user level. For the user corresponding to Figure~\ref{fig:mult_publishers_advertisers_contribution_capping}, the contribution bound would be exhausted after impressions $i_1$ and $i_2$ are processed, and thus the attributed dataset would be empty. We point out that pre-attribution contribution bounding with an per impression contribution bounding scope is a no-op; hence it results in the same attributed dataset as the first (i.e., ``None'') row in Table~\ref{tab:post_attribution_capping_example}.

\iffalse
\begin{figure}[H]
\centering
\footnotesize
\includegraphics[width=0.5\textwidth]{Pre-Attribution Capping Example.pdf}
\caption{\textcolor{blue}{Attributed Dataset with Pre-Attribution Contribution Bounding.}}
\label{fig:pre_attribution_capping_example}
\end{figure}
\fi

As is the case in Tables~\ref{tab:post_attribution_capping_example} and~\ref{tab:pre_attribution_capping_example}, pre-attribution contribution bounding in general results in a larger signal loss in the attributed dataset compared to post-attribution contribution bounding. As we will discuss shortly, the design choice of whether to enforce the \budget pre- or post-attribution significantly affects the end-to-end privacy of the ad conversion measurement system.

\subsection{Valid Configurations}
The DP aspects of a private conversion measurement system are mostly captured by the choice of (i) the attribution rule, (ii) the adjacency relation, (iii) the contribution bounding scope, and (iv) the \budget enforcement point. We refer to a setting of each of these choices as a \emph{configuration}. It is natural to consider a configuration to be operationally \emph{valid} if for every positive integer $r$, enforcing a \budget of $r$ at the required point ensures that any two adjacent datasets always result in two post-attribution post-enforcement datasets that differ on at most $C_0 \cdot r$ many (impression, conversion) pairs, where $C_0$ is an absolute constant independent of the numbers of publishers and advertisers.\footnote{To cover multi-touch attribution, we in fact bound the $\ell_1$-distance between the two post-attribution post-enforcement datasets of weighted (impression, conversion) pairs. See Definitions~\ref{def:attr-dataset} and~\ref{def:valid_configs} for more details.} If this property does not hold, then a change of, e.g., a single impression's contributions, within the the \budget of $r$, could result in a change in the attributed dataset of magnitude growing with the (unbounded and potentially very large) number of publishers and advertisers showing ads to a given user. It turns out that this condition is not only sufficient to ensure the DP of the ad measurement system, but also in a sense necessary, unless the noise is increased with the total number of publishers or advertisers---which is practically unwieldy as this number is not fixed and can vary from user to user (note that the subset of publishers and advertisers showing ads to a given user as they browse the Web cannot be fixed ahead of time). In other words, a configuration is deemed \emph{invalid} if the sensitivity increases as the number of advertisers or publishers increases. For more details, we refer the reader to Lemma~\ref{lem:main-dp} and the paragraph following it.
\subsection{Our Contributions}

In addition to formally defining the framework for ad conversion measurement and defining the notion of an operationally valid configuration, we provide a complete characterization of the validity of the configurations that most commonly arise in practice.

\paragraph{Classification}
We provide a complete classification of all the configurations of attribution rule, adjacency relation and \budget enforcement point, that are operationally valid; see Table~\ref{tab:conv-level}. We discuss the obtained classification next.

We first note that pre-attribution \budget enforcement results in valid configurations for all considered attribution rules and adjacency relations. A possible challenge to such an enforcement point is that an impression or publisher can incur a deduction from their \budget whenever they are part of the \emph{input} to the attribution rule, and even if they are not selected for attribution. This can result in situations where a publisher can see their \budget totally exhausted due to conversions that got attributed to other publishers. This is in fact the main motivation for considering post-attribution \budget enforcement, which we discuss next.

It turns out that the adjacency relations that are valid for all attribution rules in the case of post-attribution enforcement are the \textsl{conversion}, \textsl{user $\times$ advertiser}, and \textsl{user} options. A limitation of the conversion adjacency relation is that the privacy would degrade as a user converts more than once. Given that conversion events are in practice not restricted to purchases (e.g., page views, email signups, and add-to-carts can also qualify as conversions), the privacy leakage could increase noticeably. On the other hand, the \textsl{user} adjacency relation could be operationally challenging to enforce as all the publishers and advertisers would have to share the same \budget, which could end up being dominated by certain publishers and/or advertisers. For the other relations of \textsl{impression}, \textsl{user $\times$ publisher}, and \textsl{user $\times$ publisher $\times$ advertiser}, we demonstrate in Table~\ref{tab:conv-level} that the situation is much more delicate, as the validity of the different attribution rules turns out to depend on the adjacency relation. For instance, we show that, surprisingly, while the \textsl{impression}, and \textsl{user $\times$ publisher $\times$ advertiser} adjacency relations and contribution bounding scopes admit valid configurations for post-attribution \budget enforcement, the \textsl{user $\times$ publisher} adjacency relation does not.

Our results suggest that if all the considered attribution rules are to be supported, then either pre-attribution enforcement, or a \textsl{user}, \textsl{user $\times$ advertiser}, or \textsl{conversion} adjacency relation should be used. If, however, post-attribution enforcement is desired and a middle ground is sought between the \textsl{conversion} and \textsl{user} contribution bounding scopes, then only a subset of the attribution rules can be supported (as in Table~\ref{tab:conv-level}).

\added{
We emphasize that the dimensions considered in our classification are fundamental to \emph{any} conversion measurement system. Specifically, any such system has to select an attribution rule. Moreover, any DP implementation has to choose a privacy unit. It also has to bound contributions, and keep track of a remaining contribution bound.
}

For a high-level overview of the proofs, we refer the reader to Section~\ref{sec:pf_overview}.

\added{
We next give an example illustrating the idea captured by the notion of \emph{invalid} configurations.
\paragraph{Example of an Invalid Configuration}
Consider the configuration where LTA is selected as the attribution rule,  the adjacency relation (and the contribution bounding scope) is set to \textsl{user} $\times$ \textsl{publisher}, and contribution bounding is performed post-attribution.
Moreover, consider the typical setting where an advertiser runs a campaign displaying ads on multiple publishers. The advertiser’s goal is to estimate the number of conversions attributed to ad impressions shown on each publisher. Since the adjacency relation is \textsl{user} $\times$ \textsl{publisher}, and since summation has sensitivity $O(1)$, one would hope for an $\epsilon$-DP algorithm with error $O(1/\epsilon)$.
On a high level, our results suggest that surprisingly, this is not possible to achieve since adding last-touch interactions on a publisher can remove attributed conversions for \emph{all} other publishers. Hence, the sensitivity in fact grows with the (practically unbounded) number of publishers, which would result in poor measurements even if a publisher has thousands of attributed conversions.
}

\subsection{Additional Related Work}
There have been several previous works on (non-private) conversion measurement, e.g., \cite{agarwal2010estimating, menon2011response, lee2012estimating, rosales2012post}. We point out that it is common in the literature on DP to consider notions between protecting a single contribution and protecting all the user’s contributions; see, e.g., \cite{pejo2022neighborhood} and the references therein. We also note that some recent ad conversion measurement systems rely on ad-hoc privacy notions; see, e.g., \cite{ayala2022show}. To the best of our knowledge, our work is the first study of the end-to-end (differential) privacy of ad conversion measurement systems.

The very recent works of \cite{dawson2023optimizing} and \cite{aksu2023summary} provide an empirical evaluation of a differentially private ad conversion measurement system similar to the one studied in this work. Their focus is the Privacy Sandbox Attribution Reporting API (ARA) on Chrome and Android. They consider last touch attribution and an \textsl{impression} privacy unit. Both of these work consider the linear queries functionality (e.g., conversion counts and values). The former focuses on hierarchical queries whereas the latter studies the non-hierarchical setting. These works provide a concrete instantiation of a DP ad conversion measurement system similar to the one studied in this work, and they empirically evaluate the error, on real ad conversion datasets, for different values of the differential privacy parameter $\epsilon$. We refer the reader to these two papers for more details. Note that, in our terminology, ARA assumes that the attributed dataset is the input, and that post-attribution capping and noising is performed. Thus, our work complements these previous works: the valid configurations (for the \textsl{impression} adjacency relation) in our work imply that ARA satisfies an end-to-end DP guarantee for the corresponding attribution rules. Meanwhile, the invalid configurations imply that the end-to-end DP guarantee may not hold for those attribution rules.

\paragraph{Organization of Rest of the Paper}

We start Section~\ref{sec:prelims} with some notation that will be used in the rest of the paper. We recall the formal definition of DP in Section~\ref{sec:dp_def}.
% , and basic notions from (non-private) ad conversion measurement in Section~\ref{subsec:ad_conv_meas}.
In Section~\ref{subsubsec:attr_logic}, we formally define the various attribution rules that will be studied in this paper. In Section~\ref{sec:dp_meas_system}, we present the notion of an operationally valid configuration, along with its connection to the design of a DP ad conversion measurement system. The pre- and post-attribution \budget enforcement algorithms are described in Section~\ref{sec:budget-enforcement}. We present our main validity and invalidity results in Section~\ref{sec:our-results}. Some of the proofs are given in Section~\ref{sec:proof} (with the rest deferred to the Appendix). Our work opens up several interesting areas of exploration; we describe some of these in Section~\ref{conc_future_directions} \added{where we also discuss our results in the context of related practical applications}.

\newcommand{\wline}{\arrayrulecolor{white}}
\newcommand{\bline}{\arrayrulecolor{black}}
\newcommand{\gline}{\arrayrulecolor{green}}
\newcommand\crule[3][black]{\textcolor{#1}{\rule{#2}{#3}}}

\begin{table*}[h!]
\small
\setlength\arrayrulewidth{1.1pt}
\wline
\begin{center}
\begin{tabular}{r | c | c | c | c | c | c | c |}
\bline 
\cline{2-8}
& LTA & FTA & UNI & EXP & U-S & POS & IPA \\
\cline{2-8}
& \multicolumn{7}{c}{\textsf{Post-Attribution}} \\
\cline{2-8}
\wline
 Impression & \cellcolor{green!30}
 Thm.~\ref{thm:post-attr-impression-lta}($+$) & \cellcolor{green!30}
 Thm.~\ref{thm:post-attr-impression-fta}($+$) & \cellcolor{red!30}
 Thm.~\ref{thm:post-attr-impression-uni}($-$) & \cellcolor{red!30} Cor.~\ref{cor:post-attr-impression-exp}($-$) & \cellcolor{red!30}
 Thm.~\ref{thm:post-attr-impression-ushaped}($-$) & \cellcolor{orange!30} ($\pm$) & \cellcolor{orange!30} ($\pm$) \\ 
\hline
User & \multicolumn{7}{c}{\cellcolor{green!30} Thm.~\ref{thm:post-attr-user}($+$)}  \\
\hline
User $\times$ Publisher & \multicolumn{7}{c}{\cellcolor{red!30} Thm.~\ref{thm:post-attr-user-publisher}($-$)} \\
\hline
User $\times$ Advertiser & \multicolumn{7}{c}{\cellcolor{green!30} Thm.~\ref{thm:post-attr-user-advertiser}($+$)} \\
\hline
User $\times$ Pub $\times$ Adv & \cellcolor{red!30} 
Thm.~\ref{thm:post-attr-publisher-ad-lta}($-$) & \cellcolor{green!30} 
Thm.~\ref{thm:post-attr-publisher-ad-fta}($+$) &
\multicolumn{3}{c}{\cellcolor{red!30} Cor.~\ref{cor:post-attr-publisher-ad-uni-exp-ushaped}($-$)} 
& \cellcolor{orange!30} ($\pm$)
& \cellcolor{orange!30} ($\pm$)
\\
\hline
& \multicolumn{7}{c|}{\textsf{Pre-Attribution}} \\
\hline 
 Impression
 & \multicolumn{7}{c}{\cellcolor{green!30}} \\
\cline{2-8}
User & 
\multicolumn{7}{c}{\cellcolor{green!30}} \\
\cline{2-8}
User $\times$ Publisher &  \multicolumn{7}{c}{\cellcolor{green!30} Thm.~\ref{thm:pre-attr}($+$)} \\
\cline{2-8}
User $\times$ Advertiser & \multicolumn{7}{c}{\cellcolor{green!30} } \\
\cline{2-8}
User $\times$ Pub $\times$ Adv & \multicolumn{7}{c}{\cellcolor{green!30}} \\
\wline \hline \hline
Conversion* & \multicolumn{7}{c}{\cellcolor{green!30} Thm.~\ref{thm:conv-valid}($+$)} \\
\hline
\bline
\cline{2-8}
\end{tabular}
\end{center}
\caption{
\Validity of post- and pre-attribution enforcement configurations.  A  \crule[green!30]{1.5em}{.6em}($+$) cell indicates a \valid configuration, a \crule[red!30]{1.5em}{.6em}($-$) cell indicates an \invalid configuration, and a
\crule[orange!30]{1.5em}{.6em}($\pm$) cell means means that there are attribution rules in that family that result in a \valid configuration and an \invalid configuration. Specifically, both the IPA class and the POS class contain FTA and UNI; in the \textsl{impression} and \textsl{user $\times$ publisher $\times$ advertiser} adjacency relations, the former is \valid but the latter is \invalid.  (*) For the \textsl{conversion} adjacency relation, no \budget enforcement is applied as the conversion is already only used once in the attribution rule.} \label{tab:conv-level}
\end{table*}

\section{Preliminaries}\label{sec:prelims}

\paragraph{Notation}
For any positive integer $n$, we denote by $[n]$ the set $\{1, \dots, n\}$. For any finite set $S$, we denote by $S^*$ the set of all finite-length non-empty sequences of elements of $S$. For any positive integer $m$, the $m$-dimensional probability simplex, denoted by $\Delta_m$, is defined as the set of all vectors in $[0,1]^{m+1}$ whose coordinates add up to $1$. The $\ell_1$-norm of a vector $v \in \mathbb{R}^d$ is defined as $\|v\|_1 = \sum_{i=1}^d |v_i|$. The Laplace distribution with zero mean and scale parameter $b > 0$ is the continuous probability distribution whose probability density function is given by $f(x; b) = \frac{1}{2b} \cdot e^{-\frac{|x|}{b}}$, for any real number $x$.

\subsection{Differential Privacy}\label{sec:dp_def}
We denote two adjacent datasets $\bD$ and $\bD'$ by $\bD \sim \bD'$. The adjacency notions considered in this work are listed in Table~\ref{fig:adj_table}, and will be further discussed in Section~\ref{sec:adjacency_and_privacy_scope}, but DP can be defined generally for any such relation. 
% (i.e., $n$-dimensional vectors) $\bx$ and $\bx'$ differing on a single user's data (i.e., a single coordinate).

\begin{definition}[Differential Privacy \cite{dwork06calibrating}]\label{def:dp}
Let $\epsilon \geq 0$.
%and $\delta \in [0, 1]$. 
A randomized mechanism $\cM$ is %\emph{$(\epsilon, \delta)$-differentially private (denoted by $(\epsilon, \delta)$-DP)}
\emph{$\epsilon$-differentially private (denoted by $\epsilon$-DP)}
if for each pair $\bD \sim \bD'$ of adjacent datasets and each subset $\MS$ of outputs of $\cM$, it holds that $\Pr[\cM(\bD) \in \MS] \leq e^\epsilon \cdot \Pr[\cM(\bD') \in \MS]$,
%+ \delta$, 
where the probabilities are over the randomness in $\cM$. 
%When $\delta = 0$, the mechanism $\cM$ is said to be \emph{$\epsilon$-differentially private (denoted by $\epsilon$-DP)}.
\end{definition}
Intuitively, the DP definition guarantees that the outputs of two adjacent datasets are approximately statistically indistinguishable. The degree of indistinguishibility is dictated by the $\epsilon$ parameter. The smaller $\epsilon$ is, the more private the algorithm would be. In our setting, the dataset $\bD$ consists of the impressions and conversions (pre-attribution), across all users, advertisers and publishers. The output $\cM(\bD)$ is the output of the privacy-preserving ad conversion measurement system.

DP satisfies several useful
mathematical properties that have made it an appealing measure of privacy. These include robustness to post-processing, composition, and group privacy. For a comprehensive overview of the area, we refer the reader to the monographs \cite{dwork2014algorithmic, vadhan2017complexity}.
% \subsection{Conversion Measurement}

\subsection{Attribution Rule}\label{subsubsec:attr_logic}
The \emph{attribution rule} function (see, e.g., \cite{IAB} for background) takes as input a sequence of $m$ impressions and a conversion, all corresponding to the same user and advertiser, and returns a fraction in $[0,1]$ for each of the $m$ impressions. We denote this function by $\attr: \Iids^* \times \Cids \to [0,1]^*$, where $\Iids$ is the set of all possible impressions, and $\Cids$ is the set of all possible conversions.
We assume that for any input $((i_1, \dots, i_m), c)$ to the attribution function $\attr$, it is the case that the impressions $i_1, \dots, i_m$ have been sorted from least to most recent according to their timestamps and $c$ occurs later than $i_m$.  Moreover, it is assumed that $\attr((i_1, \dots, i_m), c) \in \Delta_{m-1}$.

\subsubsection{Single-Touch}
In \emph{single-touch} attribution, only a single coordinate in the output $\attr((i_1, \dots, i_m), c)$ is equal to $1$ and all the other $m-1$ coordinates are equal to $0$. We next describe some notable special cases of single-touch attribution.

\paragraph{Last-Touch Attribution (LTA)} $\attr((i_1, \dots, i_m), c) = (0, \dots, 0, 1)$, i.e., the last impression in the sequence is selected.

\paragraph{First-Touch Attribution (FTA)} 
$\attr((i_1, \dots, i_m), c) = (1, 0, \dots, 0)$, i.e., the first impression in the sequence is selected.

\subsubsection{Multi-Touch}
While the single touch attribution rules assign all the credit to a single impression, \emph{multi-touch} attribution allows spreading the credit over more than one impression. The simplest multi-touch attribution rule is uniform (aka linear).

\paragraph{Uniform (UNI)} $\attr((i_1, \dots, i_m), c) = (1/m, \dots, 1/m)$, i.e., the credit is split equally among all the impressions.

\paragraph{Exponential Time Decay (EXP)}
In the EXP rule with half-life parameter $t_{1/2}$, the credit assigned to a given impression is
proportional to $(0.5)^{\frac{t}{t_{1/2}}}$, where $t$ is the difference between the timestamp of the impression and that of the conversion.
%equal to $\alpha \cdot (0.5)^{\frac{t}{t_{1/2}}}$, where $t$ is the timestamp of the impression, and $\alpha$ is the unique positive real number such that the sum of the credits of all impressions input to the attribution rule is equal to $1$.
 
\paragraph{U-Shaped (U-S)}
If there are at least three impressions, $40\%$ of the credit goes to the first touch, $40\%$ of the credit goes to the last touch, and the remaining credit is divided uniformly over all the intermediate impressions (i.e., those that are neither first nor last). If there are two impressions, we assume that the credit is split equally between them.

\paragraph{Positional (POS)}
In positional (aka position based) attribution, the credit assigned to each impression is based on the total number of impressions and the order in which this impression occurs, i.e., the credit does not depend on the user, publisher, or advertiser IDs, or on the metadata. More precisely, a positional attribution is parameterized by a class $\mathcal{F} = \{v_m\}_{m \in \N}$ of vectors, where $v_m \in \Delta_{m - 1}$. The attribution function is  defined as $\attr((i_1, \dots, i_m), c) = v_m$. 

It is worth noting that POS is a class of attribution rules, one for each choice of $\mathcal{F}$. The POS class contains FTA ($v_m = (1, 0, \dots, 0)$), LTA ($v_m = (0, \dots, 0, 1)$), UNI ($v_m = (1/m, \dots, 1/m)$) and U-S ($v_m = (0.4, 0.2/(m-2), \dots, 0.2/(m-2), 0.4)$), but it does not contain EXP.

\paragraph{Impression-Priority Attribution (IPA)}
One of the impressions is selected by applying a prioritization function $\pr: \Iids^* \to [0,1]^*$ that depends only on the impressions and not on the conversion, i.e., $\attr((i_1, \dots, i_m), c) = \pr(i_1, \dots, i_m) \in \Delta_{m-1}$.

Similar to POS, IPA is a class of attribution rules; it contains FTA, LTA, UNI, EXP, and U-S.

\section{Differentially Private Conversion Measurement Systems}\label{sec:dp_meas_system}

To describe our framework for DP conversion measurement, we first discuss in Section~\ref{sec:adjacency_and_privacy_scope} the adjacency relations and contribution bounding scopes that we consider, and then describe attribution systems and how to privatize their outputs in Section~\ref{subsec:attr_systems}
\subsection{Adjacency Relations and Contribution Bounding Scopes}\label{sec:adjacency_and_privacy_scope}

As we saw in \Cref{def:dp}, any application of DP should specify a notion of when two datasets are considered adjacent. In the conversion measurement setting, there are several options; the most natural of them are summarized in Table~\ref{fig:adj_table}.

For any relation in the first column of Table~\ref{fig:adj_table}, we can then define two datasets to be \emph{adjacent} if one can be obtained from the other by adding or removing impressions and/or conversions as listed in the second column of the table.

Any application of DP should, at some level, limit the individual contributions; otherwise, the finite amount of noise that is injected would not be sufficient to ensure DP when the individual contributions become too large. In the conversion measurement use case, there are multiple contribution bounding scopes in which the contributions could be limited. These include the same choices listed in Table~\ref{fig:adj_table} for the adjacency relation. We consider henceforth the most natural setting where the contribution bounding scope matches the adjacency relation.  E.g., for the \textsl{user $\times$ publisher} adjacency relation, all the contributions of a given user on a given publisher share the same \budget.

\subsection{Attribution Systems}\label{subsec:attr_systems}

An \emph{attribution system} is an algorithm that takes as input  impressions and conversions sequences and outputs the attributions, represented by a set of weighted pairs of impressions and conversions (defined formally as an \emph{attributed dataset} below). 

\begin{definition}[Attributed Dataset] \label{def:attr-dataset}
An \emph{attributed dataset} $\adat$ is a set of triplets $(i, c, w) \in \Iids \times \Cids \times \R_{\geq 0}$, where for each $(i, c)$ there is a unique $w$.  We may represent an attributed dataset as a function $w_{\adat}: \Iids \times \Cids \to \R_{\geq 0}$, where $w_{\adat}(i, c)$ represents\footnote{In this notation, $w_{\adat}(i, c) = 0$ could correspond to different situations: one is that impression $i$ was considered when attributing conversion $c$ but was not selected by the attribution rule; the other is that impression $i$ and conversion $c$ are incompatible, e.g., impression $i$ takes place after conversion $c$, or they correspond to different users or advertisers.} the total weight attributed to impression $i$ by conversion $c$.
The \emph{$\ell_1$-distance} between two attributed  datasets $\adat, \adat'$ is given by $$\|\adat - \adat'\|_1 := \sum_{i \in \Iids, c \in \Cids} |w_{\adat}(i, c) - w_{\adat'}(i, c)|.$$
\end{definition}
Given an attribution system, we can build a \emph{conversion measurement system} by applying a function $f$ that maps the attributed dataset to a vector in $\R^d$; the vector measures the statistics that an ad tech would like to estimate. For example, if the ad tech wants to know the total attributions for each slice of (campaign $\times$ time-of-day), 
then each of the $d$ dimensions can represent a \valid (campaign ID, time-of-day) pair, and the value that $f$ assigns to that dimension would be the total attribution for that campaign ID and time-of-day.

Of course, as described above, the system is not (differentially) private: the ad tech can allocate a dimension for a particular user and then count exactly, e.g., the number of impressions that user sees.  Since this is not desirable, we employ two methods to ensure privacy. First, we apply \budget enforcement \emph{within the attribution system}, which will be discussed below. Second, we add (appropriately scaled) Laplace noise to each of the $d$ coordinates of the values of $f$; these noisy estimates are then sent to the ad tech.  See \Cref{fig:measurement-system} for an illustration of such a conversion measurement system. Note that we consider the \emph{central} DP setting, where a (trusted) curator runs the attribution rule, computes the function $f$, and adds the noise; the output of this curator is required to be DP.

\tikzstyle{dnode} = [rectangle, rounded corners, minimum width=3cm, minimum height=1cm,text centered, draw=black]
\tikzstyle{arrow} = [thick,->]

\begin{figure}
\centering
\begin{tikzpicture}[node distance=0.75cm]
\node (inputd) [text width=3em] {Input Dataset $\cD$};
\node (rect) [rectangle, rounded corners,minimum width=5cm, minimum height=2cm,text centered, draw=black,xshift=3.6cm] {};
\node (ms) [below of=rect,yshift=-0.5cm] {(DP) Conversion Measurement System};
\node (am) [rectangle, rounded corners,minimum width=2.25cm, minimum height=1cm,text centered, draw=black,right of=inputd,xshift=1.75cm,text width=3em] {Attribution System};
\node (attrd) [right of=am,xshift=1.75cm, text width=3em] {Attributed Dataset $\adat$};
\node (outputd) [right of=attrd,xshift=1.5cm] {$f(\adat) + Z$};
\draw [arrow] (inputd) -- (am);
\draw [arrow] (am) -- (attrd);
\draw [arrow] (attrd) -- (outputd);
\end{tikzpicture}
\caption{Illustration of a (DP) Conversion Measurement System. Each coordinate of the noise $Z$ is drawn from the Laplace distribution with an appropriate scale (see \Cref{lem:main-dp}). We note that the attribution system can include a \budget enforcement component (this is the case in Algorithms~\ref{alg:post-attr} and~\ref{alg:pre-attr}).}
\label{fig:measurement-system}
\end{figure}
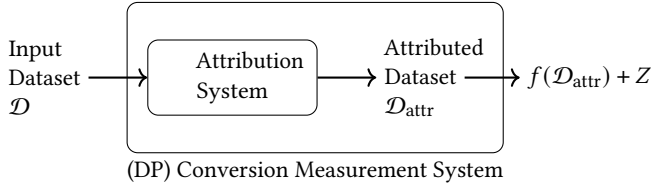

It turns out that there are two important properties needed to ensure DP of the output. The first is that the sensitivity of $f$ is small, i.e., that a small change (in the $\ell_1$-distance) to the attributed dataset does not change the value of $f$ (again, in the $\ell_1$-distance) by much. This is formalized below.

\begin{definition}[Sensitivity of $f$]\label{def:sensitivity}
For a function $f$ that maps an attributed dataset to a vector of real numbers in $\R^d$, we define its \emph{($\ell_1$-)sensitivity} to be
\begin{align*}
\Delta(f) := \max_{\adat, \adat' \atop \|\adat - \adat'\|_1 \leq 1} \|f(\adat) - f(\adat')\|_1.
\end{align*}
\end{definition}

For many natural functions, such as the ``sum by slices'' example above, the sensitivity is bounded (e.g., by $1$ in the example).

The second property we need is with regards to the attribution system itself. Although we have not defined the \budget enforcement yet, it takes in a positive integer parameter $r$, considered as the ``\budget''.\footnote{We note that while in post-attribution enforcement, the \budget $r$ could be set to any positive real number without any modification to our treatment, we choose to keep it integral for consistency with the case of pre-attribution enforcement.}  To ensure DP, we need this parameter $r$ to be an upper bound on the possible change (in the $\ell_1$ sense) in the resulting attributed dataset.

More specifically, an attribution system---which can be specified by a ``configuration'' of adjacency relation, \budget enforcement point, and attribution rule---is ``\valid'' if two adjacent datasets get mapped to attributed datasets that are at most $O(r)$ apart, as stated more precisely below.

%\begin{definition}[\Valid Configurations]
%An adjacency relation along with a \budget enforcement point is said to be valid for a given attribution rule if, for every positive integer $r$, applying a \budget of $r$ at the required enforcement point ensures that any two adjacent datasets always result in two post-attribution post-enforcement datasets that differ on at most $C_0 \cdot r$ many (impression, conversion) pairs, where $C_0$ is an absolute constant independent of the number of publishers.
%\end{definition}

\begin{definition}[\Valid Configurations]\label{def:valid_configs}
An adjacency relation along with a \budget enforcement point is said to be \emph{$C_0$-\valid} for a given attribution rule if, for every positive integer $r$, applying a \budget of $r$ at the required enforcement point ensures that any two adjacent datasets always result in two attributed datasets that are at an $\ell_1$-distance of at most $C_0 \cdot r$, where $C_0$ is an absolute constant independent of the numbers of publishers and advertisers.  We call a configuration \emph{valid} if it is $C_0$-valid for some absolute constant $C_0 > 0$.
\end{definition}

Assuming the above two properties, if we appropriately scale the Laplace-distributed noise injected in~\Cref{fig:measurement-system}, then we can guarantee that the system is DP:

\begin{lemma} \label{lem:main-dp}
If the attribution system is instantiated with a $C_0$-\valid configuration and each coordinate of the noise $Z$ is sampled according to the Laplace distribution with scale parameter $C_0 \cdot r \cdot \Delta(f) / \eps$, then the conversion measurement system is $\eps$-DP.
\end{lemma}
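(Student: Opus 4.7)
The plan is to combine three ingredients: the $C_0$-\validity of the configuration, the sensitivity bound on $f$, and the standard Laplace mechanism privacy analysis.

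First, I would fix two adjacent input datasets $\bD \sim \bD'$ and trace them through the pipeline in Figure~\ref{fig:measurement-system}. The attribution system (together with its contribution bounding subroutine, run at the enforcement point specified by the configuration) maps $\bD$ and $\bD'$ to attributed datasets $\adat$ and $\adat'$. By Definition~\ref{def:valid_configs} applied with the \budget parameter $r$, the $C_0$-\validity assumption immediately gives $\|\adat - \adat'\|_1 \leq C_0 \cdot r$.

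Second, I would upgrade the unit-distance sensitivity bound of Definition~\ref{def:sensitivity} to a Lipschitz statement $\|f(\adat) - f(\adat')\|_1 \leq \Delta(f) \cdot \|\adat - \adat'\|_1$. For the ``sum by slices'' style functions of interest (conversion counts, total values, return-on-ad-spend) this is immediate since $f$ is linear in the per-pair weights $w_{\adat}(i,c)$. More generally, one can cover any change of arbitrary $\ell_1$-norm by a chain of intermediate attributed datasets each at unit $\ell_1$-distance from its predecessor and invoke the triangle inequality (passing to a limit for real-valued weights). Composing with the previous step yields $\|f(\adat) - f(\adat')\|_1 \leq C_0 \cdot r \cdot \Delta(f)$.

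Third, I would invoke the textbook analysis of the Laplace mechanism: if a vector-valued function $g$ has $\ell_1$-sensitivity $s$ with respect to $\sim$, then releasing $g + Z$ with each coordinate of $Z$ drawn independently from the Laplace distribution of scale $s/\eps$ is $\eps$-DP. Applying this with $g := f \circ \text{(attribution system)}$, effective sensitivity $s = C_0 \cdot r \cdot \Delta(f)$, and scale parameter $b = C_0 \cdot r \cdot \Delta(f)/\eps$ delivers the claimed $\eps$-DP guarantee on the final output $f(\adat) + Z$, and robustness of DP to post-processing handles whatever the ad tech subsequently does with this noisy vector.

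The main obstacle I anticipate is not in the DP accounting---which is essentially routine---but in cleanly justifying the Lipschitz extension of $\Delta(f)$, since Definition~\ref{def:sensitivity} only controls pairs $(\adat, \adat')$ within unit $\ell_1$-distance rather than arbitrary changes of size up to $C_0 \cdot r$. I would either restrict attention to linear query functions $f$ (which cover all the example queries in the paper) or, for general $f$, argue via a chain-of-unit-steps decomposition along rational weights followed by a continuity argument. Once this Lipschitz upgrade is in place, everything else is a direct concatenation of standard facts.
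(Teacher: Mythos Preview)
Your proposal is correct and follows essentially the same three-step route as the paper: bound $\|\adat - \adat'\|_1$ via $C_0$-\validity, then bound $\|f(\adat) - f(\adat')\|_1$ via the sensitivity of $f$, then invoke the standard Laplace mechanism guarantee. The paper's proof actually glosses over the Lipschitz-extension subtlety you flag---it simply asserts $\|f(\adat) - f(\adat')\|_1 \leq C_0 \cdot r \cdot \Delta(f)$ directly from Definition~\ref{def:sensitivity} without justifying the scaling from unit distance to distance $C_0 r$---so your chain-of-unit-steps argument is a welcome clarification rather than a deviation.
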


We remark that the ``converse'' of Lemma~\ref{lem:main-dp} is also true in the following sense: if we let $f$ be the identity function, i.e., the range of $f$ is associated with $\R^{\cI \times \C}$ and let $f(\adat)_{(i, c)} = w_{\adat}(i, c)$, then the conversion measurement system with $Z$ sampled according to the Laplace distribution with scale $C_0 \cdot r / \eps$ is $\eps$-DP iff the attribution system is instantiated with a $C_0$-\valid configuration. In other words, the \validity of the configuration characterizes the DP property of the conversion measurement system in this sense.

\begin{proof}[Proof of \Cref{lem:main-dp}]
Consider any two adjacent datasets $\cD, \cD'$; let $\adat$ and $\adat'$ be the results of the attribution system on these two datasets, respectively. Then, since the configuration is $C_0$-\valid, we have $\|\adat - \adat'\|_1 \leq C_0 \cdot r$. Therefore, from the definition of $\Delta(f)$, we can conclude that $\|f(\adat) - f(\adat')\|_1 \leq C_0 \cdot r \cdot \Delta(f)$. In other words, the entire measurement system before adding noise is simply a function with $\ell_1$-sensitivity at most $C_0 \cdot r \cdot \Delta(f)$. Thus, the standard DP guarantee of the Laplace mechanism~\cite{dwork06calibrating} implies that the system is $\eps$-DP as desired.
\end{proof}

\begin{remark}
While the treatment above considers the evaluation of a \emph{single} function on the input dataset, it can be readily extended to the setting where one would like to compute  \emph{multiple} (possibly adaptively chosen) functions on the same dataset; this can be done using the standard \emph{composition} properties of DP \cite{dwork2014algorithmic}.
\end{remark}

\begin{remark} \label{remark:typical-queries}
    For typical functions $f$ that are of interest in ad conversion measurement, their sensitivity $\Delta(f)$ can be computed explicitly.
    For example, if $f$ computes the total (attributed) conversion count, then $\Delta(f) = 1$.
    Similarly, if $f$ computes the number of distinct users with attributed conversions, then $\Delta(f) = 1$ too. 
    On the other hand, if $f$ computes the sum of capped conversion values, where each value is capped to a positive real number $V$, then the sensitivity of $f$ is given by $\Delta(f) = V$.

    Queries of interests are also often ``sliced'' by certain attributes. For example, one might be interested in histogram of the total (attributed) conversion count for each publisher or each geographic location (which is e.g. determined by the impression's metadata). In this case, the sensitivity $\Delta(f)$ remains 1.
\end{remark}

\section{\Budget Enforcement}
\label{sec:budget-enforcement}
As stated earlier, a major part of the attribution system is \budget enforcement algorithm. This algorithm takes a positive integer $r$, the ``\budget'', and tries to ``enforce'' this \budget. We consider two types of enforcement in this paper, \emph{pre-attribution} and \emph{post-attribution}, which we will explain next. (It is an interesting future direction to understand if there are other \budget enforcement strategies that may be more privacy safe and/or practical than the ones considered here.)

First, let us note that for the \textsl{conversion} adjacency relation (\Cref{fig:adj_table}), no \budget enforcement is applied. The reason is that each conversion is used only once in the attribution rule. Therefore, the enforcement strategies described below will anyway not affect it.

To describe the enforcement strategies for the other adjacency relations, it is key to define the scope of a contribution bound. 
\begin{definition}[Contribution Bounding Scope]
The \emph{contribution bounding scope} for an adjacency relation is the unit of that  relation.
\end{definition}
For example, for the user adjacency relation, a contribution bounding scope would be each user.

\subsection{Post-Attribution \Budget Enforcement}\label{sec:post_attrib_pb_enforcement}

For post-attribution \budget enforcement, we simply run the attribution algorithm as usual. However, we only add each weighted (impression, conversion) pair to the attributed dataset if it does not exceed the (remaining) \budget of that scope. The pseudo-code is given in \Cref{alg:post-attr}.

\begin{algorithm}[ht]
	\caption{Attribution with Post-Attribution \Budget Enforcement.}
	\label{alg:post-attr}
	\begin{algorithmic}[1]
	    \STATE {\bfseries Parameters:} Attribution rule $\attr$, \budget $r$.
		\STATE {\bfseries Input:} Dataset $\cD = (i_1, \dots, i_n), (c_1, \dots, c_m)$ ordered from oldest to newest.
		\FOR{each contribution bounding scope $s$}
		\STATE $b_s \leftarrow r.$ \hfill \COMMENT{Set remaining \budget of this scope to $r$.}
		\ENDFOR
		\STATE $\adat \leftarrow \emptyset.$ \hfill
		\FOR{$j = 1, \dots, m$}
		\STATE $i'_1, \dots, i'_\ell \leftarrow$ impressions that come before $c_j$ in time and are associated with the same advertiser and the same user as $c_j$.
		\STATE $(w_1, \dots, w_\ell) \leftarrow \attr((i'_1, \dots, i'_\ell), c_j)$. \hfill \COMMENT{Standard attribution alg.} \label{step:attribution-post-attr}
		\FOR{$k=1,\dots,\ell$}
		\STATE $s \leftarrow$ contribution bounding scope corresponding to $(i'_k, c_j, w_k)$.
		\IF{$b_s \geq w_k$}
		\STATE $b_s \leftarrow b_s - w_k$. \hfill \COMMENT{Subtract the spent \budget.}
		\STATE Add $(i'_k, c_j, w_k)$ to $\adat$.
		\ENDIF 
		\ENDFOR
		\ENDFOR
		\RETURN $\adat$
	\end{algorithmic}
\end{algorithm}

\subsection{Pre-Attribution \Budget Enforcement}

Pre-attribution enforcement is much more pessimistic than the post-attribution approach. Specifically, we charge one unit from the \budget of \emph{every scope involved with the input impressions} to the attribution rule. If any scope does not have enough \budget left, we remove all impressions associated with that scope. As we will show below, such a pessimistic approach is---perhaps not too surprisingly---more privacy-safe than the post-attribution approach in certain settings.  The full pseudo-code of the pre-attribution \budget enforcement is given in \Cref{alg:pre-attr}.

\begin{algorithm}[ht]
	\caption{Attribution with Pre-Attribution \Budget Enforcement.}
	\label{alg:pre-attr}
	\begin{algorithmic}[1]
	    \STATE {\bfseries Parameters:} Attribution rule $\attr$, \budget $r$.
		\STATE {\bfseries Input:} Dataset $\cD = (i_1, \dots, i_n), (c_1, \dots, c_m)$ ordered from oldest to newest.
		\FOR{each contribution bounding scope $s$}
		\STATE $b_s \leftarrow r.$ \hfill \COMMENT{Set remaining \budget of this scope to $s$.}
		\ENDFOR
		\STATE $\adat \leftarrow \emptyset.$
		\FOR{$j = 1, \dots, m$}
		\STATE $I \leftarrow$ set of impressions that come before $c_j$ in time and are associated with the same advertiser and the same user as $c_j$ \label{line:all-involved-impressions}
		\STATE $S \leftarrow$ set of contribution bounding scopes corresponding to at least one impression in $I$.
		\FOR{$s \in S$}
		\IF{$b_s \geq 1$}
		\STATE $b_s \leftarrow b_s - 1$. \hfill \COMMENT{Subtract the spent \budget.}
		\ELSE
		\STATE Remove $s$ from $S$. \hfill \COMMENT{Not enough \budget remaining; discard from the final output.}
		\ENDIF 
		\ENDFOR
		\STATE $i'_1, \dots, i'_\ell \leftarrow$ impressions in $I$ corresponding to scopes in $S$. \label{line:select-impression}
		\STATE $(w_1, \dots, w_\ell) \leftarrow \attr((i'_1, \dots, i'_\ell), c_j)$.
		\STATE Add $(i'_1, c_j, w_1), \dots, (i'_\ell, c_j, w_\ell)$ to $\adat$.
		\ENDFOR
		\RETURN $\adat$
	\end{algorithmic}
\end{algorithm}

% \section{Adversaries}
% \begin{enumerate}
% \item A single ad tech that observes the output of the API across different advertisers or publishers

% \item Colluding ad techs
% \end{enumerate}

\section{Classification Results}
\label{sec:our-results}

In this section we present our results on the \validity of different adjacency relations; they are summarized in Table~\ref{tab:conv-level}.  For ease of presentation, we organize our results into two categories: when the \validity holds independent on the attribution rule and otherwise.

We only state in this section the theorems that are proved in Section~\ref{sec:proof}. For the other theorems, we provide forward pointers to their formal statements (along with their proofs) in the Appendix.

\subsection{Attribution Rule-Independent \Validity}

We start by discussing the \validity of different adjacency relations, which turn out to be independent of the attribution rule. The \textsl{conversion} adjacency relation (which, as stated earlier, involves no \budget enforcement) turns out to be valid for every adjacency relation and every attribution rule (Theorem~\ref{thm:conv-valid}).

For pre-attribution enforcement, it also turns out that all adjacency relations and all attribution rules result in \valid configurations (Theorem~\ref{thm:pre-attr}).

We next turn our attention to post-attribution enforcement.  In this case, we show that the validity of the \textsl{user $\times$ advertiser} and \textsl{user} adjacency relations are independent of the attribution rule (Theorems~\ref{thm:post-attr-user} and \ref{thm:post-attr-user-advertiser}). By contrast, the following theorem (which we will prove in Section~\ref{sec:proof}) shows the invalidity of the \textsl{user $\times$ publisher} adjacency relation for any attribution rule.

% In contrast to these results, we show the following for 
% the \textsl{user $\times$ publisher} adjacency relation.

\begin{restatable}{theorem}{postAttrUserPublisher}
\label{thm:post-attr-user-publisher}
For any attribution rule, the \textsl{user $\times$ publisher} adjacency relation with \budget enforced post-attribution constitutes an invalid configuration.
\end{restatable}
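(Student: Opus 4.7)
The plan is to exhibit, for every positive integer $r$ and every constant $C_0$, two adjacent datasets under the \textsl{user $\times$ publisher} relation whose post-attribution attributed datasets differ in $\ell_1$ by more than $C_0 \cdot r$. The mechanism is to spread a small-but-fixed attribution perturbation across arbitrarily many advertisers, so that the single shared publisher $P_0$ cannot absorb all the perturbations within its budget $r$, yet the weights on every other publisher have nonetheless been shifted.

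Fix a user Alice, a large integer $k$ (to be chosen), publishers $P_0, P_1, \dots, P_k$, and advertisers $A_1, \dots, A_k$. Both $\cD$ and $\cD'$ contain, for every $j \in [k]$, a conversion $c_j$ by Alice on $A_j$ together with one impression $i_j$ by Alice on $P_j$ for $A_j$, timed before $c_j$. The two datasets differ only in impressions on $(\text{Alice}, P_0)$: $\cD$ has none, while $\cD'$ inserts, for each $j$, one impression $i_{0,j}$ by Alice on $P_0$ for $A_j$, placed in time so that under $\attr$ the weight $w_0 > 0$ assigned to $i_{0,j}$ is the same positive constant for every $j$. Such a placement exists for every attribution rule considered (placed last for LTA, first for FTA; for UNI, EXP, U-S and more generally POS and IPA, choose whichever of the two possible positions the rule assigns strictly positive weight to, which must exist because $\attr$ returns a vector in $\Delta_1$). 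By construction, $\cD \sim \cD'$ under the \textsl{user $\times$ publisher} adjacency.

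In $\cD$, each $c_j$ produces the single attributed triple $(i_j, c_j, 1)$, all of which survive since the fresh $(\text{Alice}, P_j)$ scope has budget $r \geq 1$. In $\cD'$, each $c_j$ produces two triples, $(i_{0,j}, c_j, w_0)$ and $(i_j, c_j, 1 - w_0)$; the latter survives in its own scope $(\text{Alice}, P_j)$, but the $P_0$ triples all compete for the single scope $(\text{Alice}, P_0)$ of budget $r$, so Algorithm~\ref{alg:post-attr} keeps only the first $K := \lfloor r / w_0 \rfloor$ of them. Computing the $\ell_1$-distance, each of the $k$ pairs $(i_j, c_j)$ contributes $w_0$ (its weight drops from $1$ to $1 - w_0$) and each of the $K$ surviving pairs $(i_{0,j}, c_j)$ contributes an additional $w_0$, giving
\[
\|\adat - \adat'\|_1 \;=\; (k + K) \cdot w_0 \;\geq\; k \cdot w_0,
\]
which grows without bound in $k$ while $r$ and $w_0$ remain fixed. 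Choosing $k > C_0 \cdot r / w_0$ defeats any proposed $C_0$, so the configuration is not $C_0$-\valid for any $C_0$. The only nontrivial obstacle is the per-rule verification that $w_0 > 0$ is achievable, which reduces to a short case analysis across LTA, FTA, UNI, EXP, U-S, POS, and IPA; in each case the placement (and, for IPA, the choice of $i_{0,j}$'s metadata exploited by $\pr$) makes $w_0$ a positive constant independent of $j$.
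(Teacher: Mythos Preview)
Your construction has the right shape—one distinguished publisher $P_0$ whose removal perturbs attribution across many advertisers, each touching a different $(\text{user},\text{publisher})$ scope—and your $\ell_1$ computation is correct. For the concrete rules LTA, FTA, UNI, EXP, U-S, and for the POS class (where weights depend only on position), your placement argument cleanly yields a fixed $w_0 > 0$ independent of $j$, and the proof goes through.

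The gap is in the generality the theorem actually claims. The statement is for \emph{any} attribution rule $\attr$ (any function into the simplex), and in particular for an arbitrary IPA rule $\pr$. Your justification there—``choose whichever of the two positions the rule assigns strictly positive weight to'' and ``the choice of $i_{0,j}$'s metadata exploited by $\pr$''—does not suffice: an IPA rule may ignore metadata and timestamps entirely and assign weight based on, say, publisher identifiers, so that whether $i_{0,j}$ receives positive weight depends on the pair $(P_0,P_j)$ and possibly on $A_j$, not on anything you can set per-impression. You then need to choose $P_0$ and the $P_j$'s adaptively to the rule, and your proof gives no mechanism to do this uniformly; there is no guarantee that a single $P_0$ exists against which all $k$ pairings yield a common positive $w_0$.

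The paper's proof closes exactly this gap with an averaging step you do not have. It first builds an auxiliary dataset with all $\binom{p}{2}$ publisher pairs (one advertiser per pair), runs the arbitrary $\attr$ on it, and observes that the total attribution weight $\binom{p}{2}$ is spread over $p$ publishers, so \emph{some} publisher $P_\ell$ receives at least $(p-1)/2$. That publisher is then chosen as the one to remove. This pigeonhole selection is what makes the argument rule-agnostic; your case analysis is a substitute that works for the named rules but does not reach the full statement.
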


\subsection{Post-Attribution Enforcement and \textsl{Impression} Adjacency}

In this section we consider the \textsl{impression} adjacency under post-attribution enforcement of the \budget.  Our first result proves the \validity of FTA.\footnote{Note that it has a slightly worse absolute constant of $C_0 = 2$ compared to $C_0 = 1$ as in Theorem~\ref{thm:conv-valid}.} Following~\Cref{lem:main-dp}, this means that these configurations require adding twice as much noise compared to those in previous validity results (under the same \budget).

\begin{restatable}{theorem}{postAttrImpressionFta}
\label{thm:post-attr-impression-fta}
For the FTA rule, the \textsl{impression} adjacency relation with \budget enforced post-attribution constitutes a $2$-\valid configuration.
\end{restatable}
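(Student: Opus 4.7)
The plan is to show that adding a single impression $i^*$ to any dataset $\cD$ (to obtain $\cD' = \cD \cup \{i^*\}$) changes the resulting attributed dataset by at most $2r$ in $\ell_1$-distance, when Algorithm~\ref{alg:post-attr} is instantiated with FTA and the impression-level \budget $r$.

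First I would observe that, because the attribution rule only ever considers impressions for the same (user, advertiser) as the conversion being attributed, the introduction of $i^*$ can only affect attributions for conversions sharing the same (user, advertiser) as $i^*$. For every other conversion, the input list $(i_1', \dots, i_\ell')$ on line~\ref{step:attribution-post-attr} of Algorithm~\ref{alg:post-attr} is identical in $\cD$ and $\cD'$, and each impression's remaining budget is unaffected.

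Next I would focus on the relevant (user, advertiser). Let $i_1^{\cD}$ denote the earliest impression for that (user, advertiser) in $\cD$ (if any). Under FTA, every conversion $c$ for this (user, advertiser) with at least one prior impression is attributed to the single earliest such impression. I split into two cases. If $i^*$ is not earlier than $i_1^{\cD}$, then the earliest prior impression for every conversion is unchanged, so the attributed datasets coincide. If $i^*$ is earlier than $i_1^{\cD}$ (or no $i_1^{\cD}$ exists), then in $\cD'$ the earliest prior impression for any conversion for this (user, advertiser) after $i^*$ becomes $i^*$ itself; in particular, $i_1^{\cD}$ is no longer selected as a first touch by any conversion in $\cD'$.

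Finally I would invoke the post-attribution \budget to bound the symmetric difference. In $\cD$, the only attributed pairs that disappear in $\cD'$ are of the form $(i_1^{\cD}, c, 1)$, and Algorithm~\ref{alg:post-attr} admits at most $r$ such pairs because $i_1^{\cD}$ has budget $r$. In $\cD'$, the only new attributed pairs are of the form $(i^*, c, 1)$, and again at most $r$ such pairs are admitted because $i^*$ has budget $r$. All other triples in the attributed dataset are produced identically in both runs (same conversion, same first-touch impression, same weight, same per-impression budget history). Hence $\|\adat - \adat'\|_1 \le r + r = 2r$, establishing $2$-\validity per Definition~\ref{def:valid_configs}.

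The only step that requires a bit of care is verifying that the per-impression budgets of impressions other than $i^*$ and $i_1^{\cD}$ evolve identically across the two executions; this holds because each such impression is selected by FTA on exactly the same set of conversions in $\cD$ as in $\cD'$ (by the first observation), and budgets are decremented only on selection.
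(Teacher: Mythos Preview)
Your proof is correct and follows essentially the same approach as the paper's: both arguments isolate the relevant (user, advertiser), split on whether the changed impression is the earliest one there, and then bound the $\ell_1$-change by the post-attribution budgets of the two affected impressions (the new/removed one and the other ``first touch''). The only cosmetic differences are that you phrase adjacency as \emph{adding} $i^*$ while the paper phrases it as \emph{removing} $\ti$, and that the paper further splits Case~II into whether any other impression remains (yielding the sharper bound $r$ in that subcase), which you absorb into the uniform $2r$ bound.
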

%
%A similar statement turns out to hold for LTA, but with a slightly worse absolute constant of $3$ (instead of $2$). 
%
We also prove a similar result for LTA (Theorem~\ref{thm:post-attr-impression-lta}). By contrast, we prove that the UNI, EXP, and U-S attribution rules are all valid in this case (Theorem~\ref{thm:post-attr-impression-uni}, Corollary~\ref{cor:post-attr-impression-exp}, and Theorem~\ref{thm:post-attr-impression-ushaped} respectively).

\subsection{Post-Attribution Enforcement and \textsl{User~$\times$~Publisher~$\times$~Advertiser} Adjacency}

We next consider the \textsl{user $\times$ publisher $\times$ advertiser} adjacency relation. In this case, and under post-attribution enforcement, it turns out that only FTA results in a valid configuration (Theorem~\ref{thm:post-attr-publisher-ad-fta}) whereas the LTA, UNI, EXP, and U-S  attribution rules result in invalid configurations (Theorem~\ref{thm:post-attr-publisher-ad-lta} and Corollary~\ref{cor:post-attr-publisher-ad-uni-exp-ushaped}).

\begin{restatable}{theorem}{postAttrPublisherAdLta}
\label{thm:post-attr-publisher-ad-lta}
For the LTA attribution rule, the \textsl{user $\times$ publisher $\times$ advertiser} adjacency relation with \budget enforced post-attribution constitutes an invalid configuration.
\end{restatable}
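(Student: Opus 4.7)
The plan is to exhibit an explicit family of adjacent dataset pairs parameterized by an integer $N$, for which the $\ell_1$-distance between the attributed datasets grows linearly in $N$ while the \budget $r$ stays fixed at $1$. Since $N$ is unconstrained, no absolute constant $C_0$ can make Definition~\ref{def:valid_configs} hold, which yields invalidity.

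Fix one user $u$, one advertiser $a$, and $N+1$ distinct publishers $p_0, p_1, \dots, p_N$. In the base dataset $\cD$, place an impression $i_j$ on publisher $p_j$ at time $j$ and a conversion $c_j$ on advertiser $a$ at time $j + 1/2$, for each $j \in [N]$. Under LTA, each $c_j$ is attributed in full to $i_j$, and since distinct conversions land in distinct scopes $(u, p_j, a)$, each scope's \budget $r = 1$ is respected; the attributed dataset $\adat_{\cD}$ contains exactly the $N$ triples $(i_j, c_j, 1)$. To form $\cD'$, insert $N$ extra impressions $i'_1, \dots, i'_N$, all on publisher $p_0$, at times $j + 1/4$. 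Only impressions tied to the single scope $(u, p_0, a)$ are modified, so $\cD \sim \cD'$ under the \textsl{user $\times$ publisher $\times$ advertiser} relation.

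Now apply Algorithm~\ref{alg:post-attr} with LTA to $\cD'$. For every $j$, the impression on $p_0$ immediately preceding $c_j$ is $i'_j$, so LTA selects $i'_j$ as the last touch of $c_j$. Because conversions are processed in time order and they all fall into the same scope $(u, p_0, a)$ with \budget $1$, only $(i'_1, c_1, 1)$ is admitted; the remaining $(i'_j, c_j, 1)$ for $j \geq 2$ are discarded by the \budget check. Hence $\adat_{\cD'} = \{(i'_1, c_1, 1)\}$. Since $i'_1 \neq i_j$ for every $j$, no triple appears in both attributed datasets, so
\[
\|\adat_{\cD} - \adat_{\cD'}\|_1 \;=\; N + 1.
\]
Taking $N \to \infty$ with $r = 1$ fixed shows that for every absolute constant $C_0$ there exist adjacent $\cD, \cD'$ with $\|\adat_{\cD} - \adat_{\cD'}\|_1 > C_0 \cdot r$, contradicting $C_0$-validity for any $C_0$.

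There is no real obstacle here; the proof is a counterexample and the only points that require care are (i) verifying that the modification from $\cD$ to $\cD'$ touches only the single $(u, p_0, a)$ scope (so the adjacency relation genuinely holds), and (ii) verifying that Algorithm~\ref{alg:post-attr} really does retain one and only one attribution in $\cD'$ because of the time-ordered processing that concentrates all last-touches on the now-exhausted scope $(u, p_0, a)$. Both are immediate from the construction. The conceptual takeaway is that a local change within one $(u, p, a)$ scope can, under LTA, simultaneously strip attributions from arbitrarily many other $(u, p', a)$ scopes, and post-attribution capping on the modified scope does not shield the others.
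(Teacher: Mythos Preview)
Your proof is correct and takes essentially the same approach as the paper. Both constructions use one ``interfering'' publisher whose impressions capture all last-touches (and are then capped to a single attribution), while in the adjacent dataset those attributions spread across $N$ (respectively $p-1$) distinct publishers; the paper removes the interfering publisher's impressions whereas you add them, but the adjacency relation is symmetric and the resulting lower bound of order $N$ on $\|\adat-\adat'\|_1$ is identical.
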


% Put together, Theorems~\ref{thm:post-attr-user},~\ref{thm:post-attr-user-advertiser},~\ref{thm:post-attr-user-publisher},~\ref{thm:post-attr-impression-fta},~\ref{thm:post-attr-impression-lta},~\ref{thm:post-attr-impression-uni},~\ref{thm:post-attr-impression-exp},~\ref{thm:post-attr-publisher-ad-fta},~\ref{thm:post-attr-publisher-ad-lta},~\ref{thm:post-attr-publisher-ad-uni}, and~\ref{thm:post-attr-publisher-ad-exp} give Table~\ref{tab:post-attr}.

%Put together, Theorems~\ref{thm:post-attr-user} through~\ref{cor:post-attr-publisher-ad-uni-exp-ushaped} give Table~\ref{tab:post-attr}.

\subsection{Intuition and Proof Overview}\label{sec:pf_overview}
Before we proceed to the formal proofs of these classification results, let us briefly (and informally) discuss the high-level ideas behind them. For the invalidity results in the post-attribution case, there are (roughly speaking) two root causes behind them:
\begin{itemize}
\item {\bf Cascading Effect within Attribution Rule.} A single impression can be an input of multiple executions of the attribution algorithm (Line \ref{step:attribution-post-attr} of \Cref{alg:post-attr}). When removing such an impression from the dataset, the attribution rule also changes the weights assigned to other input impressions. Below we construct datasets which make sure that these changes affect different privacy units, implying that it can exceed the contribution bound. This is the idea behind our constructions for the impression adjacency relation (\Cref{thm:post-attr-impression-uni}, \Cref{cor:post-attr-impression-exp}, and \Cref{thm:post-attr-impression-ushaped}). 
\item {\bf Multiple Impressions Affecting Multiple Privacy Units.} In some scenarios (see LTA, FTA discussion below), changing a single impression does not result in a cascading effect. In this case, the high-level idea is to construct multiple impressions---corresponding to the same privacy unit---in such a way that removing each one affects some other different privacy unit. When all of these impressions are removed simultaneously, the effect occurs across different privacy units and therefore bypasses the contribution bounding. This is the gist of our constructions for the user $\times$ publisher adjacency relation (\Cref{thm:post-attr-user-publisher}) and the user $\times$ publisher $\times$ advertiser adjacency relation (\Cref{thm:post-attr-publisher-ad-lta}).
\end{itemize}

We next discuss the validity results. We remark that the attribution rule-independent results (e.g., for the user-level adjacency relation) are relatively straightforward to prove, so we will focus our discussion here on the exceptions: LTA for the impression-level adjacency relation (\Cref{thm:post-attr-impression-lta}), and FTA for the impression-level adjacency (\Cref{thm:post-attr-impression-fta}) and the user $\times$ publisher $\times$ advertiser adjacency relations (\Cref{thm:post-attr-publisher-ad-fta}).
\begin{itemize}
\item {\bf LTA.} When we remove an impression, LTA essentially ``routes'' all the attributions of this impression to the previous one with the same advertiser. Thus, in the impression-level adjacency relation, contribution bounding will upper-bound the change. On the other hand, this reasoning fails for the user $\times$ publisher $\times$ advertiser adjacency relation because it is possible to remove multiple impressions that affect different privacy units (i.e., different publishers); this is indeed the second root cause described above for invalidity.
\item {\bf FTA.} When we remove an impression, if this impression is not the first one of this advertiser, then no change occurs in attribution. Otherwise, FTA  ``routes'' all the attributions of this impression to the second impression of this advertiser. Similar to LTA, this implies validity for the impression-level adjacency relation. However, in contrast to LTA, this argument remains true in the user $\times$ publisher $\times$ advertiser adjacency relation; this is because, even after removing multiple impressions of the same advertiser, all attributions are routed to the same impression---the first one of this advertiser after the removal. Therefore, post-attribution enforcement successfully bounds the change.
\end{itemize}
This concludes our summary of the proof ideas. We will next formalize these by providing the proofs of Theorems~\ref{thm:post-attr-user-publisher},~\ref{thm:post-attr-impression-fta}, and~\ref{thm:post-attr-publisher-ad-lta}. (The remaining proofs are deferred to the Appendix.)

\subsection{Selected Proofs}
\label{sec:proof}
\newcommand{\ti}{\tilde{i}}
\newcommand{\tc}{\tilde{c}}
\newcommand{\tI}{\tilde{I}}
\newcommand{\tC}{\tilde{C}}
\newcommand{\tD}{\tilde{\cD}}

\paragraph{FTA, \textsl{Impression} Adjacency.}
We now prove the validity of the FTA rule for the \textsl{impression} adjacency relation. The proof follows the outline from the previous subsection.

\postAttrImpressionFta*

% The rough intuition of the proof is that, for FTA, removing a single impression will either cause no change at all (if this is not the first impression for the user, advertiser pair) or it will cause all attributed conversions to be re-attributed to a unique impression. The latter means that, after post-attribution enforcement, the change is bounded.

\begin{proof}[Proof of \Cref{thm:post-attr-impression-fta}]
Consider two adjacent datasets $\cD, \cD'$ such that $\cD'$ results from removing an impression $\ti$. Let $A$ be $\ti$'s advertiser, $U$ be $\ti$'s user and $C_A$ denote the set of conversions from the advertiser and the user. We may assume w.l.o.g. that all conversions in $C_A$ occur \emph{after} the first impression w.r.t. the advertiser $A$ and the user $U$. (As other conversions remain unattributed in both $\cD$ and $\cD'$.)  We consider two cases, based on whether $\ti$ is the first impression w.r.t. its advertiser $A$ and its user $U$ (in $\cD$).
\begin{itemize}
\item Case I: $\ti$ is \emph{not} the first impression w.r.t. the advertiser $A$ and the user $U$. In this case, the two attributed datasets $\adat, \adat'$ are exactly the same, because all conversions in $C_A$ are attributed to the first impression w.r.t. the advertiser $A$ (which is not $\ti$).
\item Case II: $\ti$ is the first impression w.r.t. the advertiser $A$ and the user $U$. In this case, all conversions $c_j \in C_A$ are attributed to $\ti$. These are the only conversions whose attributions change between $\cD$ and $\cD'$.

To analyze this change, consider two subcases, whether $\ti$ is the only impression in $\cD$ from $A$.
\begin{itemize}
\item Case IIa: $\ti$ is the only impression in $\cD$ from $A$ and the user $U$. In this case, all conversions in $C_A$ become unattributed in $\cD'$. Therefore, %we have
%\begin{align*}
$\|\adat - \adat'\|_1 = \sum_{c_j} w_{\adat}(\ti, c_j) \leq r,$
%\end{align*}
where the inequality follows from the post-attribution \budget enforcement for %the impression 
$\ti$.
\item Case IIb: $\ti$ is not the only impression in $\cD$ from $A$ and the user $U$. Let $\ti'$ be the second impression in $\cD$ from $A$. In this case, every conversion in $C_A$ is either attributed to $\ti'$ or unattributed in $\cD'$. %$\ti$ and $\ti'$ are the only two impressions whose attributions change between the two datasets.
Furthermore, no conversions are attributed to $\ti'$ in $\cD$ (because $\ti$ comes before $\ti'$ in the same advertiser $A$ and the same user $U$). 
Therefore, we have
\begin{align*}
\|\adat - \adat'\|_1 = \sum_{c_j} w_{\adat}(\ti, c_j) +  \sum_{c_j} w_{\adat'}(\ti', c_j) \leq 2r,
\end{align*}
where the inequality follows from the post-attribution \budget enforcement for impressions $\ti$ and $\ti'$.
\end{itemize}
\end{itemize}
In all cases, we can conclude that $\|\adat - \adat'\|_1 \leq 2r$. %, which means that this is a valid configuration with $C_0 = 2$ as desired.
\end{proof}

\paragraph{LTA, \textsl{User $\times$ Publisher $\times$ Advertiser} Adjacency.}

Next, we prove the invalidity of the LTA rule under the \textsl{user $\times$ publisher $\times$ advertiser} adjacency relation. % under this adjacency relation. 
This is due to the fact that we may arrange the impressions/conversions in such a way that a single publisher gets a large amount of attribution weight (before \budget enforcement) and that, once this publisher is removed, this weight is re-attributed to multiple publishers. The latter ensures that the change grows with the number of publishers. (This is also the main difference between LTA and FTA, since we cannot ensure such a condition for FTA.) Such an example is given together with a formal argument below.

\postAttrPublisherAdLta*

\begin{proof}[Proof of \Cref{thm:post-attr-publisher-ad-lta}]
Let $r = 1$ and let $p > 1$ be any  integer. We construct the dataset $\cD$ as follows:
\begin{itemize}
\item Let there be a single user, a single advertiser, and $p$ publishers $P_1, \dots, P_p$.
\item Let $i_1, \dots, i_{2p - 2}$ be impressions such that impression $i_{2k - 1}$ is associated with publisher $P_k$ and impression $i_{2k}$ is associated with publisher $P_p$ for all $k \in [p - 1]$.
\item Let $c_1, \dots, c_{p - 1}$ be conversions such that $c_k$ appears after $i_{2k}$ and before $i_{2k + 1}$, for all $k \in [p - 1]$.
\end{itemize}
Finally, let $\cD'$ be the dataset resulting from removing publisher $P_p$'s impressions (i.e., $i_2, \dots, i_{2p - 2}$) from $\cD$.

In $\cD$, publishers $P_1, \dots, P_{p - 1}$'s impressions get attributed with zero weight. On the other hand, in $\cD'$, each of these publishers get attribution weight of exactly one. Therefore, $\|\adat - \adat'\|_1 \geq p - 1$, invalidating the attribution system for this configuration.
\end{proof}

\paragraph{Any Attribution Rule, \textsl{User $\times$ Publisher} Adjacency.}
Finally, we prove the invalidity of the \textsl{user $\times$ publisher} adjacency for any attribution rule. We remark that, if we were looking for an invalidity proof of a specific attribution rule, then the construction could have been simplified. For example, the construction in \Cref{thm:post-attr-publisher-ad-lta} above also shows the invalidity of LTA in this setting. However, we would like our proof to generalize to \emph{all} attribution rules. Our construction below accomplishes this by first creating another ``dummy'' dataset $\tD$ (with multiple publishers) to understand how the attribution weights are distributed across different publishers. We then create the datasets $\cD, \cD'$ that differ on the highest weighted publisher to ensure that there is a large--unbounded--change between the two attributed datasets.

\postAttrUserPublisher*

%As alluded to in the proof overview, we prove this by constructing multiple impressions for a single publisher. 

\begin{proof}[Proof of \Cref{thm:post-attr-user-publisher}]
Let $\attr$ be any attribution rule. To create our datasets $\cD, \cD'$, let us start by constructing another dataset $\tD$ as follows.
\begin{itemize}
\item Let there be $\binom{p}{2}$ advertisers $A_{\{1,2\}}, A_{\{1,3\}}, \dots, A_{\{p - 1,p\}}$, and $p$ publishers $P_1, \dots, P_p$.
\item For each advertiser $A_{\{j, k\}}$, let there be impressions $i^{\{j, k\}}_j, i^{\{j, k\}}_k$ and conversion $c^{\{j, k\}}$, coming after both impressions. Furthermore, let $i^{\{j, k\}}_j$ and $i^{\{j, k\}}_k$ be associated with publishers $P_j$ and $P_k$, respectively.
\end{itemize}
Now, suppose we run the attribution system---without any \budget enforcement---on $\tD$. Let $P_\ell$ denote the publisher that gets the largest total attribution weight (with ties broken arbitrarily). Note that the total weight it receives must be at least $\binom{p}{2} / p = 0.5(p - 1)$.

We now construct $\cD$ by keeping only advertisers $A_{\{\ell, j\}}$ for $j \in [p] \setminus \{\ell\}$ in $\tD$ (and discard the rest of advertisers together with all impressions and conversions associated to them). Let the \budget $r$ be 1. Furthermore, let $\cD'$ denote the dataset resulting from removing all impressions corresponding to publisher $P_\ell$ from $\cD$. We will now show that $\|\adat - \adat'\|_1 \geq 0.5(p - 1)$, which implies that the attribution system is an invalid one.

To show this, first observe that, in $\cD'$, each $i^{\{\ell, j\}}_j$ is the only impression that gets fed into the attribution rule for $c^{\{\ell, j\}}$; therefore, it gets attributed with weight one. Furthermore, since we use a \budget $r = 1$ for each (user, publisher) pair and each $i^{\{\ell, j\}}_j$ corresponds to a different publisher, the \budget enforcement leaves these weights unchanged. In summary, we have
\begin{align*}
w_{\adat'}(i^{\{\ell, j\}}_j, c^{\{\ell, j\}}) = 1.
\end{align*}
On the other hand, by our choice of the publisher $P_\ell$, we have
\begin{align*}
\sum_{j \in [p] \setminus \{\ell\}} w_{\adat}(i^{\{\ell, j\}}_j, c^{\{\ell, j\}}) \leq 0.5(p - 1).
\end{align*}
Combining the above two inequalities, we get that
\begin{align*}
&\|\adat - \adat'\|_1 \\
& \geq \sum_{j \in [p] \setminus \{\ell\}} |w_{\adat'}(i^{\{\ell, j\}}_j, c^{\{\ell, j\}}) - w_{\adat}(i^{\{\ell, j\}}_j, c^{\{\ell, j\}})| \\
& \geq 0.5(p - 1). &\qedhere
\end{align*}
\end{proof}

\section{Discussion and Future Directions}\label{conc_future_directions}

In this paper, we presented a formal framework for DP ad conversion measurement setting. We also demonstrated a delicate interplay between attribution and privacy. We defined the notion of operationally valid configurations, and provided a complete classification of the validity of the configurations based on the most popular attribution rules, adjacency relations, contribution bounding scopes, and \budget enforcement points. We hope that our end-to-end differential privacy framework can lead to a solid foundation for practical privacy-preserving ad conversion measurement systems.

%While we have focused for simplicity on pure-DP (Definition~\ref{def:dp}), $\ell_1$-sensitivity (Definition~\ref{def:attr-dataset},~\ref{def:sensitivity},~and \ref{def:valid_configs}), and Laplace noise (Lemma~\ref{lem:main-dp}), our formalism extends readily to the case of approximate-DP \cite{dwork2006our}, $\ell_2$-sensitivity, and Gaussian noise, which could for instance be used to train DP predicted conversion rate models based on DP stochastic gradient descent \cite{abadi2016deep}. \added{(For prior work on non-private conversion models, see, e.g.,~\cite{agarwal2010estimating, menon2011response, lee2012estimating, rosales2012post}.)} Moreover, our framework directly extends to partition selection algorithms (see, e.g., \cite{desfontaines2022differentially} and the references therein) if we relax to approximate DP.

While we have focused for simplicity on pure-DP (Definition~\ref{def:dp}), $\ell_1$-sensitivity (Definition~\ref{def:attr-dataset},~\ref{def:sensitivity},~and \ref{def:valid_configs}), and Laplace mechanism (Lemma~\ref{lem:main-dp}), our formalism extends readily to the case of approximate-DP \cite{dwork2006our}, other type of sensitivities, and other DP mechanisms. For example, if the set of measurements is large, we could replace the Laplace mechanism with the partition selection algorithm (see, e.g., \cite{desfontaines2022differentially} and the references therein) if we relax to approximate DP. Similarly, we can extend our framework to $\ell_2$-sensitivity\footnote{Note that changing the sensitivity notion may change the set of valid configurations.} and Gaussian mechanism, which could for instance be used to train DP predicted conversion rate models based on DP stochastic gradient descent \cite{abadi2016deep}. \added{(For prior work on non-private conversion models, see, e.g.,~\cite{agarwal2010estimating, menon2011response, lee2012estimating, rosales2012post}.)}

We describe next some interesting future research directions.

\paragraph{Adjacency Relation $\neq$ Contribution Bounding Scope}
%Another interesting question is whether there exist a single privacy scope and adjacency relation that result in valid configurations for all attribution rules under post-attribution enforcement.
We focused in this work on the most natural setting where the \budget scope is the same as the adjacency relation. %It might also be interesting to determine if there exists a privacy scope that results in valid configurations for most or all adjacency relations.
In principle, this is not necessary: e.g., one might consider a \textsl{user} contribution bounding scope with a \textsl{user $\times$ advertiser} adjacency relation. It might be interesting to give a characterization in such cases, as it will lead to an even more fine-grained understanding of the privacy provided by the conversion measurement system.

\paragraph{Contribution Capping: Beyond Pre- and Post-Attribution?} While we focus on pre-attribution and post-attribution contribution capping, it remains an interesting open question whether there are other (general) capping procedures that can further improve the utility-privacy trade-off.

To illustrate the challenge, note that an intuitive capping strategy is to do it ``at the query evaluation time''. Although such a strategy makes sense for certain query functions $f$ and adjacency relations, it is not completely well-defined for all functions $f$. For example, let $f$ be the number of distinct users with attributed conversions from \Cref{remark:typical-queries} and suppose we are interested in the \textsl{impression} adjacency relation. If two impressions share the same user ID, then it is not clear what their contributions are; on one hand, removing each of them alone does not cause any change to the value of $f$. Meanwhile, removing them both may decrease the value. Such a situation is only exacerbated when we have a more complicated function $f$. We also remark that it is preferable if a single capping procedure is used for all functions $f$ since it allows more flexibility for the measurements that can be made on the platform.

\paragraph{Privacy of the Computation}
Our work has focused on guaranteeing privacy against an adversary that has access to ``what'' is being computed, but not to ``how'' it is being computed.
%  (note that this is a common distinction in the DP literature \cite{???})
Concretely, our results capture the setting where a single entity has access to the all the raw impression and conversion data, and seeks to release DP estimates to some requested conversion measurement queries. Studying the privacy of ``how'' this is computed is an important direction for future work. For instance, one could naturally extend this formalism to distributed settings where the trust in a single entity is relaxed by relying on methods such as secure multi-party computing \cite{evans2017pragmatic}, and on-device noise addition as in local DP \cite{evfimievski2003limiting, dwork06calibrating, kasiviswanathan2011can} or shuffle DP \cite{bittau2017prochlo,erlingsson2019amplification,CheuSUZZ19}. Moreover, we studied the case of a static dataset of impressions and conversions; it would be of interest to study the online variant of the problem where privacy needs to be ensured at any time as the impressions and conversions take place.

\paragraph{Enhanced Attribution}
Some attribution systems offer a conversion lookback window option (which limits how far back in time from a conversion are impressions eligible for attribution), and an impression expiry option (which limits how far in the future would the impression be eligible for attribution). It would be interesting to investigate the interplay of these enhancements with privacy and their impact on the validity of the configurations.

In our classification, we considered the simplest and most commonly used attribution rules (listed in Section~\ref{subsubsec:attr_logic}), which operate on a single user's data. It would be interesting to investigate the interplay between DP and more advanced alternatives such as those based on the Shapley value (e.g., \cite{singal2022shapley}) as well as data-driven attribution (DDA), which by contrast is a class of attribution rules that operate on the entire dataset (across users).

\paragraph{Incentives}

While there has been interesting prior work at the intersection of privacy and economics, e.g., \cite{ghosh2011selling, abowd2019economic}, understanding privacy and incentives in the conversion measurement setting would greatly benefit from further investigation.
For instance, our study captures the case where a single ad tech company would like to query the DP conversion measurement system across multiple publishers. In reality, multiple ad techs, which are often competing but could in principle collude, would want to issue DP queries on overlapping impressions and conversions taking place on the same set of publishers and advertisers.

Finally, while a \textsl{user} contribution bounding scope can admit valid configurations, it is vulnerable to ``crowding out attacks'', where, e.g., one publisher can exhaust the \budget of a user (by showing them a large number of impressions). Incorporating the economic incentives of different entities into the analysis of privacy and utility of conversion measurement systems seems worthwhile.

\added{
\paragraph{Privacy-Utility Trade-offs of Various Tasks}
The classification in this work is in terms of sensitivity, which is closely related to additive noise mechanisms as these naturally calibrate the noise scale to the sensitivity. While most proposed DP conversion measurement systems follow this sensitivity and additive noise paradigm, it would be valuable to consider other families of mechanisms, and to quantify the privacy-utility trade-offs of various estimation tasks. 
}

\paragraph{Correlation across Users.}
There are settings where different users' data can be correlated. In ad measurement, this can arise if multiple users watch the same ad (e.g., on a TV). Then, their impressions are correlated, and extra care is needed when applying DP \cite{tschantz2020sok}. We leave the exploration of this interesting setting for future work.

\added{
\paragraph{DP Advertising}
Applying DP in practical advertising systems has been notoriously difficult for the reasons considered in this work, namely: How to define adjacent datasets? Given the correlation between a given user's behavior across (a practically unbounded number of) websites (and/or apps), can DP be applied without adding a disproportionately large amount of noise that would preclude the measurement of simple statistics? We hope that our work provides a stepping stone for tackling these questions in the setting of attribution measurement---the cornerstone of digital advertising---and leads to solid deployments of DP in practical advertising systems.
}

\newpage
\bibliographystyle{ACM-Reference-Format}
\bibliography{refs}

\appendix

\section{Deferred Statements and Proofs}

In this section, we provided all the remaining statements and proofs for our classification results from \Cref{sec:our-results}. We remark that, in all the datasets that we construct below for the invalidity results, all the impressions/conversions belong to a single user, and we will henceforth not specify this explicitly.  Throughout, 
let $\adat$ and $\adat'$ denote the attributed  datasets resulting from $\cD, \cD'$, respectively.

\subsection{\textsl{Conversion} Adjacency Relation}

We start with the simple proof for the validity of the \textsl{conversion} adjacency relation (\Cref{thm:conv-valid}).

% \convValid*

\begin{restatable}{theorem}{convValid}
\label{thm:conv-valid}
For any attribution rule, the \textsl{conversion} adjacency relation (without \budget enforcement) constitutes a \valid configuration with $r = C_0 = 1$.
\end{restatable}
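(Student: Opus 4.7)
The plan is to directly exploit the fact that the attribution function outputs a probability vector (i.e., $\attr((i_1,\dots,i_m), c) \in \Delta_{m-1}$), so the total weight that any single conversion distributes across impressions is at most $1$. Since the \textsl{conversion} adjacency relation lets $\cD$ and $\cD'$ differ in exactly one conversion $c$, and since the attribution rule processes each conversion independently (the output weights for conversion $c'$ depend only on $c'$ and the impressions preceding it with the matching user/advertiser), only the column of $w_\adat(\cdot, c)$ can change between $\adat$ and $\adat'$.

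Concretely, I would assume without loss of generality that $\cD'$ is obtained by removing the single conversion $c$ from $\cD$. Then for every conversion $c' \neq c$, the set of impressions fed into the attribution rule and the resulting weights are identical in both executions, so $w_\adat(i, c') = w_{\adat'}(i, c')$ for all $i$. For $c$ itself, we have $w_{\adat'}(i, c) = 0$ for all $i$, while $\sum_i w_\adat(i, c) \leq 1$ because the attribution weights for $c$ lie in $\Delta_{m-1}$ (and equal $0$ if there are no preceding impressions from the matching advertiser/user). Summing up,
\begin{align*}
\|\adat - \adat'\|_1 = \sum_{i} w_\adat(i, c) \leq 1 = C_0 \cdot r.
\end{align*}

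There is essentially no obstacle here; the only thing to verify carefully is that no \budget enforcement is applied (as stated in the theorem and Section~\ref{sec:budget-enforcement}), so the attributed dataset is literally what the attribution rule outputs conversion-by-conversion, with no cross-conversion interaction. This independence across conversions is what makes the argument insensitive to the choice of attribution rule, yielding validity with the optimal constant $C_0 = 1$ for every rule considered.
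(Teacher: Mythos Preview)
Your proposal is correct and follows essentially the same approach as the paper: both arguments observe that attribution is applied independently per conversion, so the only difference between $\adat$ and $\adat'$ lies in the weights assigned by the single added/removed conversion $c$, and these sum to at most $1$ since $\attr$ outputs a vector in $\Delta_{m-1}$. The paper phrases it as adding $c$ to $\cD$ rather than removing it, but this is the same argument by symmetry.
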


%\begin{proof}[Proof of \Cref{thm:conv-valid}]
\begin{proof}
Consider two adjacent datasets $\cD, \cD'$ such that $\cD'$ results from adding a conversion $c$ to $\cD$.  Notice that $\adat'$ is exactly equal to $\adat$ together with $(i'_1, c, w_1), \dots, (i'_\ell, c, w_\ell)$ where $i'_1, \dots, i'_\ell$ are the impressions that come before $c$ and correspond to the same advertiser as $c$, and $(w_1, \dots, w_\ell) = \attr((i'_1, \dots, i'_\ell), c)$. Therefore, we have $\|\adat - \adat'\|_1 \leq \sum_{j \in [\ell]} w_j = 1,$
where the equality follows from the definition of an attribution rule. Thus, the \textsl{conversion} adjacency relation (without \budget enforcement) constitutes a valid configuration with $r = C_0 = 1$ as desired.
\end{proof}

\subsection{Pre-Attribution Enforcement}

%Next, we move on to pre-attribution enforcement. 
Once again, pre-attribution enforcement results in valid configurations for all adjacency relations and attribution rules.

% \preAttr*

\begin{restatable}{theorem}{preAttr}
\label{thm:pre-attr}
For any attribution rule, any adjacency relation with \budget enforced pre-attribution constitutes a valid configuration with $C_0 = 1$.
\end{restatable}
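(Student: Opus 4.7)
The plan is to exploit the pessimism of Algorithm~\ref{alg:pre-attr}: any scope whose engagements change across adjacent datasets can be involved in at most $r$ attribution invocations before being forced out of $S$ entirely. I would start by fixing adjacent $\cD \sim \cD'$ and letting $s^*$ denote the unique contribution bounding scope in whose engagements they differ (WLOG, $\cD'$ is obtained from $\cD$ by removing all such engagements of $s^*$).

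The first step is a synchronization lemma: by induction on the conversion index $j$, the budget values $b_s$ for every scope $s \neq s^*$ agree between the two executions of Algorithm~\ref{alg:pre-attr} on $\cD$ and $\cD'$ just before processing $c_j$. The key observation is that the set $S \setminus \{s^*\}$ computed at line~9 is determined by the non-$s^*$ impressions in $I$, and those impressions are identical in $\cD$ and $\cD'$; hence each non-$s^*$ scope is decremented on exactly the same conversions in both runs, preserving the invariant.

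Granted synchronization, a conversion $c_j$ can contribute to $\|\adat - \adat'\|_1$ only if $s^*$ supplies impressions to $\attr$ on $\cD$'s side (line~17) but not on $\cD'$'s side. The budget of $s^*$ starts at $r$ and decreases by $1$ each time this occurs, until it falls below $1$ and $s^*$ is removed from $S$ on $\cD$'s side as well---after which the two runs again produce identical $\attr$ inputs. This caps the number of ``affected'' conversions at $r$. For each such $c_j$, both sides output nonnegative weight vectors on $\Iids$ with $\ell_1$-norm at most $1$, so a triangle inequality yields a bounded per-conversion contribution and hence $\|\adat - \adat'\|_1 = O(r)$. The adjacency relations that also remove conversions (namely user and user $\times$ advertiser) are subsumed because a removed conversion only contributes through attribution weights on $s^*$'s own impressions, again capped by $s^*$'s budget.

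The main obstacle I expect is sharpening the per-conversion bound to land on the stated absolute constant. For the user, user $\times$ advertiser, and conversion relations, the argument goes through immediately because only one of $\cD, \cD'$ ever carries nonzero attribution on an affected conversion, so each such conversion contributes at most $1$. For user $\times$ publisher and the finer publisher-advertiser relations, both runs may assign nonzero weight to a single affected conversion via non-$s^*$ impressions, and the naive triangle inequality gives a looser bound; a tighter aggregate argument would track, for each affected conversion, how the weight stripped from $s^*$'s impressions is reallocated by $\attr$ to the remaining impressions, which I expect to require a brief case analysis over the adjacency relations in Table~\ref{fig:adj_table}.
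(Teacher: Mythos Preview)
Your plan is essentially the paper's argument: the paper also observes that the impressions fed to $\attr$ at Line~\ref{line:select-impression} agree in the two runs except for those in the changed scope, and then bounds the number of conversions at which any such impression survives by $r$ via the budget counter. Your synchronization lemma is a cleaner way to state the same thing, and your treatment of the adjacency relations that also remove conversions (user, user $\times$ advertiser) is correct for the reason you give.

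Where you are more careful than the paper is exactly the point you flag as an obstacle, and your suspicion is well founded. The paper's proof asserts that $\|\adat-\adat'\|_1$ is at most the \emph{number} of affected conversions, i.e., it implicitly uses a per-conversion bound of $1$. That step does not hold for the impression, user $\times$ publisher, and user $\times$ publisher $\times$ advertiser relations: take LTA, the impression adjacency, $r=1$, two impressions $i_1,i_2$ and one conversion $c_1$ in $\cD$, and let $\cD'$ add a single later impression $\tilde\imath$. Then $w_{\adat}(i_2,c_1)=1$, $w_{\adat'}(i_2,c_1)=0$, $w_{\adat'}(\tilde\imath,c_1)=1$, so $\|\adat-\adat'\|_1=2>r$. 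More generally, at each of the at most $r$ affected conversions both runs output probability vectors, and the triangle inequality gives $2$, not $1$; this is tight, so no ``tighter aggregate argument'' will recover $C_0=1$ for these relations. Your argument is therefore the right one and yields validity with $C_0=2$ in general (and $C_0=1$ for the user, user $\times$ advertiser, and conversion relations, exactly as you note); the paper's stated constant appears to be an oversight.
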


%\begin{proof}[Proof of \Cref{thm:pre-attr}]
\begin{proof}
Consider adjacent datasets $\cD, \cD'$ such that $\cD'$ results from adding impressions $\ti_1, \dots, \ti_a$ and conversions $\tc_1, \dots, \tc_b$ all associated with a single contribution bounding scope $s$. Let $\tI := \{\ti_1, \dots, \ti_a\}$ and $\tC := \{\tc_1, \dots, \tc_b\}$. %Furthermore, let $I_{\text{common}}$ and $C_{\text{common}}$ denote the set of impressions and the set of conversions common into the two datasets.

Observe that, in all privacy notions considered, the set of impressions considered for attribution to $c \in \tC$ (on Line~\ref{line:all-involved-impressions} of \Cref{alg:pre-attr}) must come from $\tI$. This means that the attribution rule applied to $\tC$ does not affect any scope apart from $s$. This in turn implies that, in both $\cD, \cD'$, the impressions fed into the attribution rule for each $c_j \notin \tC$ (on Line~\ref{line:select-impression}) are the same apart from those in $\tI$. 

Thus, $\|\adat - \adat'\|_1$ can be at most the number of conversions $c_j$ for which at least one impression from $\tI$ is included in its attribution rule (on Line~\ref{line:select-impression}). However, when this happens, we subtract one from the \budget of scope $s$; therefore, the number of such $c_j$'s can be at most $r$. Hence, we have $\|\adat - \adat'\|_1 \leq r$.
\end{proof}

\subsection{Post-Attribution Enforcement}

We will now move on to the second---and more subtle---privacy enforcement: post-attribution.

\subsubsection{Validity of \textsl{User} and \textsl{User $\times$ Advertiser} Relations}

%We start with the validity of the \textsl{user} and \textsl{user $\times$ advertiser} adjacency relations. %These two proofs are nearly identical but we keep them separate for clarity.
We start by considering the \textsl{user} adjacency relation.

% \postAttrUser*

\begin{restatable}{theorem}{postAttrUser}
\label{thm:post-attr-user}
For any attribution rule, the \textsl{user} adjacency relation with \budget enforced post-attribution constitutes a valid configuration with $C_0 = 1$.
\end{restatable}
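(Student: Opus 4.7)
The plan is to argue directly from the structure of Algorithm~\ref{alg:post-attr} together with the observation that the attribution rule only ever mixes impressions and conversions belonging to a single user. Fix any attribution rule $\attr$ and any positive integer $r$, and consider two adjacent datasets $\cD \sim \cD'$ under the \textsl{user} relation. By definition of adjacency, there is a single user $U$ such that $\cD'$ is obtained from $\cD$ by adding or removing all impressions and conversions associated with $U$; without loss of generality, assume $\cD'$ is obtained by adding these events to $\cD$. Let $\adat, \adat'$ denote the attributed datasets produced by Algorithm~\ref{alg:post-attr} on $\cD, \cD'$ respectively.

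The first step is to observe that attributions are user-local: in Algorithm~\ref{alg:post-attr}, the set of impressions fed to $\attr$ for a conversion $c_j$ (line~\ref{step:attribution-post-attr}) is restricted to impressions with the same user as $c_j$. Hence for every conversion not associated with $U$, the input to the attribution rule is identical in $\cD$ and $\cD'$. Furthermore, the contribution bounding scope is also the user, so the remaining budget $b_s$ for each user $s \neq U$ is updated identically in the two runs of the algorithm (since the two runs process the same sequence of events for each non-$U$ user, in the same order). Consequently, $w_{\adat}(i, c) = w_{\adat'}(i, c)$ for every $(i, c)$ where the user of $i$ (equivalently, of $c$) is not $U$.

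The second step is to bound the only remaining contribution to $\|\adat - \adat'\|_1$, namely the weights assigned to $(i, c)$ pairs for which the associated user is $U$. In $\cD$ there are no such pairs, so these weights are all zero in $\adat$. In $\cD'$, post-attribution \budget enforcement at the user scope ensures that $\sum_{(i,c)} w_{\adat'}(i, c) \leq r$, where the sum is over pairs associated with $U$: indeed, every time a triple $(i'_k, c_j, w_k)$ with user $U$ is added to $\adat'$, the budget $b_U$ is decremented by exactly $w_k$, and a triple is never added unless $b_U \geq w_k$, so the total weight added never exceeds the initial budget $r$. Combining the two steps yields
\begin{align*}
\|\adat - \adat'\|_1 \;=\; \sum_{\substack{(i,c) \text{ with}\\ \text{user } U}} w_{\adat'}(i, c) \;\leq\; r,
\end{align*}
which establishes that the configuration is $C_0$-\valid with $C_0 = 1$. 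I do not anticipate any real obstacle here; the key (easy) points are (i) that attribution never crosses user boundaries, so other users' attributions are unaffected, and (ii) that the user-scope budget directly caps the total $\ell_1$-mass of the new user's attributed pairs.
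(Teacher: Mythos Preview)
Your proposal is correct and follows essentially the same argument as the paper's proof: both observe that attribution is user-local so non-$U$ users are unaffected, and then use the user-scope post-attribution \budget to cap the total $\ell_1$-mass contributed by user $U$ at $r$. Your write-up is slightly more explicit (you spell out that the budget variables $b_s$ for $s \neq U$ evolve identically across the two runs), but the substance is the same.
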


%\begin{proof}[Proof of \Cref{thm:post-attr-user}]
\begin{proof}
Consider adjacent datasets $\cD, \cD'$ such that $\cD'$ results from adding impressions $\ti_1, \dots, \ti_a$ and conversions $\tc_1, \dots, \tc_b$ all associated with a single user $U$. Let $\tI := \{\ti_1, \dots, \ti_a\}$ and $\tC := \{\tc_1, \dots, \tc_b\}$.

Notice that outside of the conversions in $\tC$, the attribution system produces exactly the same result for both $\cD, \cD'$. Therefore, we have 
$\|\adat - \adat'\|_1 = \sum_{\ti \in \tI, \tc \in \tC} w_{\adat'}(\ti, \tc).$
Moreover, the post-attribution enforcement exactly ensures that the quantity on the right-hand side is at most $r$. %This concludes our proof.
\end{proof}

% \postAttrUserAdvertiser*

We next consider the \textsl{user $\times$ advertiser} adjacency relation.

\begin{restatable}{theorem}{postAttrUserAdvertiser}
\label{thm:post-attr-user-advertiser}
For any attribution rule, the \textsl{user $\times$ advertiser} adjacency relation with \budget enforced post-attribution constitutes a valid configuration with $C_0 = 1$.
\end{restatable}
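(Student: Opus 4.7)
The plan is to mirror the approach taken in the proof of Theorem~\ref{thm:post-attr-user}, exploiting the fact that the \emph{user $\times$ advertiser} scope is ``aligned'' with the attribution rule in the sense that the attribution rule itself only considers impressions sharing the same (user, advertiser) pair as the conversion being attributed. Specifically, I would begin by considering two adjacent datasets $\cD, \cD'$ such that, without loss of generality, $\cD'$ is obtained from $\cD$ by adding a set $\tI = \{\ti_1, \dots, \ti_a\}$ of impressions together with a set $\tC = \{\tc_1, \dots, \tc_b\}$ of conversions, all associated with a single user $U$ and a single advertiser $A$.

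The key step would be to observe two invariances. First, for any conversion $c \notin \tC$ (hence $c \in \cD$, and, by the adjacency convention, $c$ is not associated with the scope $(U, A)$), the impressions fed to the attribution rule for $c$ on Line~\ref{step:attribution-post-attr} of \Cref{alg:post-attr} are precisely those sharing $c$'s user and advertiser, and none of the impressions in $\tI$ qualify. Second, the remaining-budget tracker $b_s$ for every scope $s \neq (U, A)$ evolves identically on $\cD$ and on $\cD'$. Together these two invariances yield $w_{\adat}(i, c) = w_{\adat'}(i, c)$ whenever $c \notin \tC$.

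It then follows that the only contributions to $\|\adat - \adat'\|_1$ come from pairs $(\ti, \tc)$ with $\ti \in \tI$ and $\tc \in \tC$, and moreover $w_{\adat}(\ti, \tc) = 0$ for such pairs since $\tc \notin \cD$. Hence
\[
\|\adat - \adat'\|_1 \;=\; \sum_{\ti \in \tI,\, \tc \in \tC} w_{\adat'}(\ti, \tc).
\]
Since all weights on the right-hand side correspond to the single contribution bounding scope $(U, A)$, the post-attribution enforcement step directly caps this sum at $r$, giving the desired $C_0 = 1$ bound.

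I do not anticipate a substantive obstacle: the proof is essentially an unpacking of definitions, made clean by the alignment between the contribution bounding scope $(U, A)$ and the scope on which the attribution rule is defined. The only subtlety to be careful about is the adjacency convention (treating adjacent datasets as differing by the addition of a full scope's worth of engagements, as in the proof of Theorem~\ref{thm:post-attr-user}); once this is fixed, the invariances above follow immediately and the budget-enforcement bound does the rest.
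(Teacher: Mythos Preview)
Your proposal is correct and follows essentially the same approach as the paper's proof: set up adjacent datasets differing by the addition of all impressions and conversions for a single $(U,A)$ scope, observe that attributions for conversions outside $\tC$ are unchanged, and conclude via the post-attribution cap on the scope $(U,A)$. The paper's version is terser (it omits the explicit budget-tracker invariance you spell out), but the argument is the same.
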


\begin{proof}[Proof of \Cref{thm:post-attr-user-advertiser}]
Similar to the proof of \Cref{thm:post-attr-user}, consider two adjacent datasets $\cD, \cD'$ such that $\cD'$ results from adding impressions $\ti_1, \dots, \ti_a$ and conversions $\tc_1, \dots, \tc_b$ all associated with a single user $U$ and a single advertiser $A$. Let $\tI := \{\ti_1, \dots, \ti_a\}$ and $\tC := \{\tc_1, \dots, \tc_b\}$.

Outside of the conversions in $\tC$, the attribution system produces exactly the same result for both $\cD, \cD'$. This implies that $\|\adat - \adat'\|_1 = \sum_{\ti \in \tI, \tc \in \tC} w_{\adat'}(\ti, \tc)$, which is at most $r$ due to post-attribution enforcement.
\end{proof}

%\subsubsection{Invalidity of \textsl{User $\times$ Publisher} Relation} 
%We will show that the \textsl{user $\times$ publisher} relation is invalid for any attribution rule under post-attribution enforcement. The main idea of the construction is to make removing a single publisher affect many other publishers (through ``re-attribution'' of impressions), thereby contributing to a change that grows with the total number of publishers. Details of the construction based on this idea are given below.

\subsubsection{Impression Relation}

We next consider the \textsl{impression} adjacency relation. As stated earlier, the validity in this case does depend on the attribution rule. We note that the validity for FTA was already shown in the main body (\Cref{thm:post-attr-impression-fta}).

%\paragraph{First-Touch Attribution.} We start with FTA, for which \textsl{impression} adjacency relation is valid. The proof here is slightly more subtle compared to previous validity proofs. Specifically, removing a single impression can affect another impression: if the impression is the first for its advertiser, then all conversions of that advertiser gets re-attributed to the second impression instead. Fortunately, this is the worst case that can happen here, and the \budget enforcement for both impressions ensures that the final change (in the attributed dataset) is at most $2r$ (i.e., $C_0 = 2$). We formalize this below.

\paragraph{Last-Touch Attribution.} %We next move on the last-touch attribution rule. 
Again this results in a valid configuration (\Cref{thm:post-attr-impression-lta}).
The idea is similar to before, except that instead of re-attribution to the second impression, the re-attribution happens to the second-to-last impression before the conversion. %The worst constant $C_0 = 3$ here comes because the second-to-last impression may already be attributed by some other conversions in the original dataset, which may now be discarded due to \budget enforcement.

% \postAttrImpressionLta*

\begin{restatable}{theorem}{postAttrImpressionLta}
\label{thm:post-attr-impression-lta}
For last-touch attribution, the \textsl{impression} adjacency relation with \budget enforced post-attribution constitutes a \valid configuration with $C_0 = 2$.
\end{restatable}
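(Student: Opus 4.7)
The plan is to follow the same template as the proof of \Cref{thm:post-attr-impression-fta}. Consider adjacent datasets $\cD, \cD'$ where $\cD'$ is obtained from $\cD$ by removing a single impression $\ti$, say with user $U$ and advertiser $A$. Under LTA, the only conversions whose attribution differs between the two runs are those attributed to $\ti$ in $\cD$; call this set $C_{\ti}$. By the last-touch rule these are precisely the $(U,A)$-conversions occurring between $\ti$ and the next $(U,A)$-impression (or end of stream). In $\cD'$ each $c \in C_{\ti}$ is attributed instead to $\ti'$, the $(U,A)$-impression immediately preceding $\ti$, if such a $\ti'$ exists, and becomes unattributed otherwise. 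All other impressions keep exactly the same set of attributed conversions, so the $\ell_1$ change concentrates on $\ti$ and $\ti'$.

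If no such $\ti'$ exists, the change is $\sum_{c} w_{\adat}(\ti,c) \le r$ by the per-impression budget on $\ti$, which is well within $2r$. When $\ti'$ does exist, let $p$ be the number of conversions attributed to $\ti'$ in $\cD$ (all occurring before $\ti$) and let $q = |C_{\ti}|$ (all occurring after $\ti$). Because LTA assigns unit weight and \Cref{alg:post-attr} processes conversions in time order, $\ti'$ contributes $\min(p,r)$ triples to $\adat$ and $\ti$ contributes $\min(q,r)$. In $\cD'$ the impression $\ti'$ now inherits all $p+q$ of these conversions in the same time order and therefore contributes $\min(p+q,r)$ triples to $\adat'$.

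With this setup the $\ell_1$ distance collapses to $\min(q,r) + \bigl(\min(p+q,r) - \min(p,r)\bigr)$, counting weight removed from $\ti$ and weight added to $\ti'$. I would finish with a short case split: if $p \ge r$ the second term vanishes and the sum is $\min(q,r) \le r$; if $p < r$ the second term equals $\min(q, r-p)$, and a sub-split on $q \le r$ versus $q > r$ bounds the whole sum by $2r$. The worst case is $p=0$, $q \ge r$, which matches the claimed $C_0 = 2$ exactly.

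The main obstacle is bookkeeping rather than arithmetic. One has to verify that the re-routed mass in $\cD'$ lands only on $\ti'$ and not on any earlier $(U,A)$-impression, and that the first $\min(p,r)$ originals retained on $\ti'$ are the same in $\cD$ and in $\cD'$. Both facts follow from LTA depending only on the nearest prior $(U,A)$-impression and from \Cref{alg:post-attr} being deterministic under a fixed time ordering, which together preclude the cascading effect flagged in \Cref{sec:pf_overview}. This is also what distinguishes the present setting from the \textsl{user $\times$ publisher $\times$ advertiser} adjacency relation of \Cref{thm:post-attr-publisher-ad-lta}, where removing multiple impressions at once re-routes mass across many distinct publishers and breaks the bound.
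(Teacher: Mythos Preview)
Your proposal is correct and follows essentially the same approach as the paper's own proof: split on whether $\ti$ has a $(U,A)$-predecessor $\ti'$, observe that all re-routed attributions land on $\ti'$, and bound the change by the per-impression budgets on $\ti$ and $\ti'$. Your explicit $p,q$ bookkeeping for the budget enforcement is more detailed than the paper's version (which simply invokes the budget on $\ti'$ to bound the second sum by $r$), but the structure and key observations are the same.
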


%\begin{proof}[Proof of \Cref{thm:post-attr-impression-lta}]
\begin{proof}
Consider adjacency datasets $\cD, \cD'$ such that $\cD'$ results from removing an impression $\ti$. Let $A$ be $\ti$'s advertiser, $U$ be its user, and $C_A$ denote the set of conversions attributed to $\ti$ in $\cD$. We consider two cases, based on whether $\ti$ is the first impression w.r.t. its advertiser $A$ and its user $U$ (in $\cD$).
\begin{itemize}
\item Case I: $\ti$ is the first impression w.r.t. the advertiser $A$ and the user $U$. In this case, all conversions in $C_A$ become unattributed. As a result,
\begin{align*}
\|\adat - \adat'\|_1 = \sum_{c_j} w_{\adat}(\ti, c_j) \leq r,
\end{align*}
where the inequality follows from the post-attribution \budget enforcement for the impression $\ti$.
\item Case II: $\ti$ is \emph{not} the first impression w.r.t. the advertiser $A$ and the user $U$. Let $\ti'$ denote the impression that comes right before $\ti$ w.r.t. the advertiser $A$ and the user $U$. $\ti$ and $\ti'$ are the only two impressions whose attributions change between the two datasets. Furthermore, all attributions to $\ti'$ in $\cD$ remains unchanged in $\cD'$, because the corresponding conversions must come before $\ti$ and therefore before all conversions in $C_A$. In other words, the only changes to $\ti'$ are the additional attributions from conversions in $C_A$. As a result, we have
\begin{align*}
\|\adat - \adat'\|_1 &= \sum_{c_j} w_{\adat}(\ti, c_j) + \sum_{c_j \in C_A} w_{\adat'}(\ti', c_j) \leq 2r,
\end{align*}
where the inequality again follows from the post-attribution \budget enforcement for impressions $\ti$ and $\ti'$.
\end{itemize}
In all cases, we can conclude that $\|\adat - \adat'\|_1 \leq 2r$, which means that this is a valid configuration with $C_0 = 2$ as desired.
\end{proof}

\paragraph{Uniform Multi-Touch and Exponential Time Decay Attributions.}
Next, we move on the prove the invalidity for uniform multi-touch and exponential time decay attribution rules.

% \postAttrImpressionUni*

\begin{restatable}{theorem}{postAttrImpressionUni}
\label{thm:post-attr-impression-uni}
For uniform multi-touch attribution, the \textsl{impression} adjacency relation with \budget enforced post-attribution constitutes an invalid configuration.
\end{restatable}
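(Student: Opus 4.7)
The plan is to prove invalidity by exhibiting a family of adjacent datasets, parameterized by $n$, for which $\|\adat - \adat'\|_1$ grows as $\Theta(\log n)$ while the \budget $r$ stays fixed at $1$. The strategy is the ``Cascading Effect within Attribution Rule'' from \Cref{sec:pf_overview}: under UNI, removing $\ti$ changes the per-conversion weight of every other eligible impression from $1/(l+1)$ to $1/l$, and one can stage many phases whose $\Theta(1/l)$ discrepancies accumulate despite impression-level capping.

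Concretely, fix one user and one advertiser and set $r = 1$. Let $\cD$ consist of a distinguished impression $\ti$ placed first, followed by $n$ phases, where phase $l$ introduces a fresh impression $i_l$ and then $l+1$ conversions $c_{l,1}, \ldots, c_{l,l+1}$; let $\cD'$ be $\cD$ with $\ti$ removed. The first step is to verify by induction on $l$ that at the start of phase $l$ the budgets of $\ti, i_1, \ldots, i_{l-1}$ are all exhausted in both $\cD$ and $\cD'$, so only $i_l$ absorbs weight during phase $l$. Under UNI, each conversion in phase $l$ gives $i_l$ weight $1/(l+1)$ in $\cD$---fitting exactly $l+1$ times into a budget of $1$---versus $1/l$ in $\cD'$, fitting $l$ times; hence $c_{l,l+1}$ is admitted in $\cD$ but dropped in $\cD'$.

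Given this structure, phase $l \ge 2$ contributes $l \cdot |1/(l+1) - 1/l| + 1/(l+1) = 2/(l+1)$ to $\|\adat - \adat'\|_1$, while phase $1$ additionally contributes $1$ from $\ti$'s two admitted attributions, giving a phase-$1$ total of $2$. Summing yields $\|\adat - \adat'\|_1 = 2 + \sum_{l=2}^{n} 2/(l+1) = 2 H_{n+1} - 1 = \Theta(\log n)$. Since $r = 1$ is fixed and $\|\adat - \adat'\|_1 / r$ is unbounded in $n$, \Cref{def:valid_configs} fails for every absolute constant $C_0$, establishing invalidity.

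The main obstacle is the inductive budget-tracking: the greedy admission rule $b_s \ge w_k$ in \Cref{alg:post-attr} must behave exactly as claimed, so that the $l+1$ weights of size $1/(l+1)$ in $\cD$ consume $i_l$'s budget with no residual that a smaller, later weight could fill---otherwise, cascading could redistribute attribution back to earlier impressions and disrupt the telescoping sum. The choice $r = 1$ together with integer phase sizes makes all divisibility checks exact, so the remaining work is careful bookkeeping.
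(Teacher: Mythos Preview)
Your proposal is correct and follows essentially the same approach as the paper: both construct a single-user, single-advertiser dataset where a sequence of impressions is followed by a carefully chosen number of conversions so that each impression's budget is exactly exhausted in both $\cD$ and $\cD'$, making the $\ell_1$-difference a harmonic sum. Your indexing (with a distinguished $\ti$ and $l+1$ conversions in phase $l$) is a relabeling of the paper's (impressions $i_1,\dots,i_p$ with $j$ conversions after $i_j$, removing $i_1$); your budget-tracking induction is more explicit than the paper's ``it is simple to see,'' but the construction and the $\Theta(\log n)$ bound are the same.
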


Since exponential time decay can be used to implement uniform multi-touch rule when assuming that the time of every impression is the same, \Cref{cor:post-attr-impression-exp} is immediate.

\begin{restatable}{corollary}{postAttrImpressionExp}
\label{cor:post-attr-impression-exp}
For exponential time decay attribution, the \textsl{impression} adjacency relation with \budget enforced post-attribution constitutes an invalid configuration.
\end{restatable}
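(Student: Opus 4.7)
}

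The plan is to reduce directly to \Cref{thm:post-attr-impression-uni} by exploiting the fact that EXP attribution, when restricted to impressions whose timestamps are all very close relative to the conversion timestamp, converges to UNI attribution. Concretely, for a fixed half-life $t_{1/2}$, the weight assigned by EXP to an impression at time $t$, for a conversion at time $T$, is proportional to $(0.5)^{(T-t)/t_{1/2}}$; if all impressions in a given attribution sequence lie in a window $[T - 1 - \delta, T - 1]$, then each pre-normalization weight lies in $[(0.5)^{(1+\delta)/t_{1/2}}, (0.5)^{1/t_{1/2}}]$, and as $\delta \to 0$ the normalized EXP weights converge to the uniform weights $1/m$.

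Building on this, I would take the invalidity construction $\cD, \cD'$ from the proof of \Cref{thm:post-attr-impression-uni}, which yields an $\ell_1$-distance lower bound $\|\adat_{\text{UNI}} - \adat'_{\text{UNI}}\|_1 \geq g(p)$ for some unbounded function $g$ of a parameter $p$ (number of publishers/advertisers/impressions). I would then re-assign timestamps to the impressions so that, for each conversion $c_j$, all impressions from the same user and advertiser preceding $c_j$ are packed into a tiny window $[T_j - \delta, T_j]$ with $c_j$ at time $T_j + 1$; relative timestamps between distinct attribution sequences are inherited from $\cD$ to preserve the global ordering and adjacency structure. With $\delta$ chosen sufficiently small, the EXP attribution rule applied to these timestamped versions of $\cD$ and $\cD'$ produces weights that differ from the UNI weights (on the same input) by at most an arbitrarily small amount $\eta > 0$ per coordinate.

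The one subtlety, which I expect to be the main obstacle, is that post-attribution enforcement is a threshold operation: an impression scope accepts a pair only if its remaining budget covers the added weight. In principle, a slight upward perturbation of a weight could flip a pair from accepted to rejected (or vice versa), potentially altering the attributed dataset discontinuously. To handle this, I would argue that the UNI invalidity construction can be taken (or trivially modified by adding a tiny amount of padding impressions) so that the cumulative weights arising on each impression scope avoid the threshold $r$ with a strict slack $\eta_0 > 0$; then picking $\delta$ small enough that all EXP-vs-UNI weight perturbations are below $\eta_0/2$ guarantees that the same pairs are accepted under both attribution rules. Consequently, the EXP-attributed datasets differ coordinate-wise from the UNI-attributed datasets by at most $O(\eta)$ per surviving pair, and since the number of surviving pairs is bounded by $O(g(p))$, the EXP $\ell_1$-distance satisfies $\|\adat_{\text{EXP}} - \adat'_{\text{EXP}}\|_1 \geq g(p) - O(\eta \cdot g(p))$, which is still $\Omega(g(p))$ for $\eta$ small. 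Since $g(p)$ is unbounded in $p$, this rules out any absolute constant $C_0$ and establishes invalidity.
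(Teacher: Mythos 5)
Your underlying idea---reduce EXP to UNI by controlling timestamps and reuse the construction from \Cref{thm:post-attr-impression-uni}---is the same as the paper's, but the paper executes it exactly rather than in the limit: it simply sets the timestamps of all impressions to be \emph{equal}, so that every pre-normalization weight $(0.5)^{t/t_{1/2}}$ is identical and the normalized EXP weights are literally $1/m$. With that choice, EXP \emph{is} UNI on the constructed datasets, the attributed datasets coincide with those in the UNI proof, and the entire perturbation and threshold analysis disappears; the corollary is then immediate. Your approximate version ($\delta \to 0$) can be made to work, but the step you flag as the main obstacle is a genuine one and your proposed fix is underspecified: in the construction of \Cref{thm:post-attr-impression-uni}, impression $i_j$ receives $j$ pairs of weight exactly $1/j$ each, so its cumulative weight is exactly $r = 1$ and the budget check $b_s \geq w_k$ passes with zero slack at the last pair. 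There is therefore no strict slack $\eta_0$ to begin with, and an arbitrarily small upward perturbation of the EXP weights really would reject the final pair for each scope; ``adding a tiny amount of padding impressions'' would have to be spelled out and re-verified against the lower-bound computation. The cleanest repair is to take $\delta = 0$, i.e., identical timestamps, which is exactly what the paper does.
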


%\postAttrImpressionExp*

The idea of the proof of \Cref{thm:post-attr-impression-uni} is to construct an impression that is involved in many conversions' attribution rule. Due to the uniform attribution, such an impression naturally ``draws'' large weights from the overall attribution. Therefore, when removing it, such weights get re-attributed back to the other impressions, resulting in a large change in the attributed dataset.

%\begin{proof}[Proof of \Cref{thm:post-attr-impression-uni}]
\begin{proof}
Let $r = 1$ and let $p$ be any positive integer. Consider $\cD$ such that there are $p$ impressions $i_1, \dots, i_p$ and $\binom{p+1}{2}$ conversions $c_{1, 1}, c_{2, 1}, c_{2, 2}, \dots, c_{p, p}$ such that conversion $c_{k, 1}, \dots, c_{k, k}$ appears after $i_k$ (and before $i_{k + 1}$ if it exists); all impressions and conversions are associated with the same advertiser.

Then, let $\cD'$ denote $\cD$ but with $i_1$ removed.

Under the uniform attribution rule, it is simple to see that, for all $j = 2, \dots, p$, we have $w_{\adat}(i_j, c_{j, k}) = \frac{1}{j}$ for all $k = 1, \dots, j$ and $w_{\adat'}(i_j, c_{j, k}) = \frac{1}{j - 1}$ for all $k = 1, \dots, j - 1$. Therefore,
\begin{align*}
\|\adat - \adat'\|_1 
& \geq \sum_{j = 2}^p \sum_{k \in [j]} |w_{\adat}(i_j, c_{j, k}) - w_{\adat'}(i_j, c_{j, k})| \\
& = \sum_{j=2}^p \frac{2}{j} \geq 2 \ln(p/2).
\end{align*}
By taking $p \to \infty$, we can see that this is not bounded above by $C_0 \cdot r$ for any constant $C_0$.
\end{proof}

\paragraph{U-Shaped Attribution.} %Finally, we consider the 
U-shaped attribution rule also results in an invalid configuration (\Cref{thm:post-attr-impression-ushaped}). The construction is an adaptation of the above construction for the uniform attribution. The rough idea is that the ``middle'' part of U-shaped attribution rule is essentially the uniform attribution (scaled by a factor of 0.2). This allows us to use the U-shaped attribution rule to ``implement'' the uniform distribution rule.

% \postAttrImpressionUShaped*

\begin{restatable}{theorem}{postAttrImpressionUShaped}
\label{thm:post-attr-impression-ushaped}
For U-shaped attribution, the \textsl{impression} adjacency relation with \budget enforced post-attribution constitutes an invalid configuration.
\end{restatable}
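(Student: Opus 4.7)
The plan is to adapt the construction from \Cref{thm:post-attr-impression-uni} by exploiting the following structural observation: whenever $m \ge 3$ impressions precede a conversion, U-shape assigns each of the $m-2$ middle impressions a weight of $0.2/(m-2)$, so the middle layer behaves exactly like a uniform attribution scaled by a factor of $0.2$. Any invalidity witness for the uniform rule can therefore be ``lifted'' to U-shape by sandwiching the critical impressions between anchor impressions that serve as a permanent first touch and a per-stage last touch, keeping those anchors' roles fixed under the adjacency move.

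Concretely, on a single (user, advertiser) I would place an anchor $f$ at time $0$ and, for each $k \in [p]$, an impression $i_k$, a stage-specific last marker $\ell_k$, and a batch of conversions $c_{k,1}, \ldots, c_{k,n_k}$, with the timings arranged so that the impressions preceding each $c_{k,l}$ are exactly $f, i_1, \ell_1, i_2, \ell_2, \ldots, i_k, \ell_k$. U-shape then assigns $f$ and $\ell_k$ each weight $0.4$ and distributes the remaining $0.2$ uniformly across the $2k-1$ middles. Removing $i_1$ keeps $f$ first and $\ell_k$ last, and merely shrinks the middle layer by one, so each surviving middle's weight jumps from $0.2/(2k-1)$ to $0.2/(2k-2)$---a uniform reshuffle on the middle layer, scaled by $0.2$. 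By the same telescoping calculation used in \Cref{thm:post-attr-impression-uni}, each conversion at stage $k$ then contributes $\Theta(1/k)$ to $\|\adat - \adat'\|_1$. Choosing the $n_k$ so that middle caps tighten stage-by-stage (analogously to the uniform construction), summing over $k$ yields $\|\adat - \adat'\|_1 = \Omega(\log p)$, unbounded in $p$, which contradicts any candidate constant $C_0$.

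The main obstacle is verifying that the post-attribution capping introduced by the anchors does not erase this harmonic growth. The anchor $f$ absorbs weight $0.4$ from every conversion and so saturates its cap after a constant number of them; but because $f$ plays the same role in both $\adat$ and $\adat'$, its net contribution to $\|\adat - \adat'\|_1$ is only $O(1)$ and can be absorbed into the constant. Analogous accounting is needed for each $\ell_k$ (receiving weight $0.4$ at its own stage and only $O(1/k)$ as a middle thereafter) and for the middle impressions $i_j, \ell_j$ whose cumulative weight eventually crosses $1$. It suffices, however, to restrict attention to the range of stages in which none of these caps have yet tightened in either dataset---an interval that still contains $\Theta(p)$ stages---and within this cap-safe window the harmonic reshuffle contribution survives unchanged, yielding the desired $\Omega(\log p)$ lower bound and hence invalidity.
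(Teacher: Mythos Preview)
Your core idea---that the middle layer of U-shaped attribution is a $0.2$-scaled uniform rule, so the uniform invalidity witness can be lifted---is exactly the paper's idea. The paper executes it more economically: it keeps the single chain $i_1,\dots,i_p$ from the uniform proof (no anchors), letting $i_1$ and the current-stage $i_j$ absorb the two $0.4$ endpoint weights, and places $j-2$ conversions after $i_j$ so that each $i_j$ exhausts its budget as a middle at stage $j{+}1$. Your scaffolding with a permanent first anchor $f$ and per-stage last markers $\ell_k$ also yields a valid witness, at the cost of roughly twice as many impressions and the extra bookkeeping you flag as the ``main obstacle''.

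Your resolution of that obstacle, however, does not work as written. You first choose $n_k$ so that each middle $i_j$'s cap tightens at its own stage; with that choice there is \emph{no} nontrivial interval of stages in which ``none of these caps have yet tightened''---by construction a cap binds at every stage. Conversely, if you weaken $n_k$ so that some $\Theta(p)$-stage window is genuinely cap-free for all middles in it, then the cap-free constraint for the earliest such middle forces $\sum_k n_k/(2k-1)=O(1)$, while the reshuffle contribution over the window is $\Theta\!\big(\sum_k n_k/(2k-1)\big)=O(1)$, not $\Omega(\log p)$. So the window argument cannot deliver the harmonic growth.

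The fix is simpler than a window. Every term in $\|\adat-\adat'\|_1$ is nonnegative, so lower-bound by the $i_j$'s alone. With $n_k = 5(2k-1)$, impression $i_j$ is attributed at all $10j-5$ conversions of stage $j$ in $\cD$ (weight $0.2/(2j-1)$ each, exhausting its budget) and at the first $10j-10$ of them in $\cD'$ (weight $0.2/(2j-2)$ each); a direct computation gives its contribution as $2/(2j-1)$, and $\sum_{j\ge 2} 2/(2j-1)=\Omega(\log p)$. The anchor $f$ plays an identical role in both datasets and contributes zero to the difference, and the $\ell_k$ terms may simply be dropped from the lower bound---no window is needed.
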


\begin{proof}[Proof of \Cref{thm:post-attr-impression-ushaped}]
Let $r = 1$ and let $p$ be any positive integer greater than three. Consider $\cD$ such that there are $p$ impressions $i_1, \dots, i_p$ and $\binom{p-1}{2} - 1$ conversions, constructed as follows: For every $j = 4, \dots, p$, there are $j - 2$ impressions $c_{j, 1}, \dots, c_{j, j - 2}$ after $i_j$ (and before $i_{j + 1}$ if it exists).
All impressions and conversions are associated with the same advertiser.

Then, let $\cD'$ denote $\cD$ but with $i_1$ removed.

Under the U-shaped attribution rule, %it is can be verified that, 
for all $j = 4, \dots, p - 1$, we have $w_{\adat}(i_j, c_{j+1, k}) = \frac{0.2}{j - 1}$ for all $k = 1, \dots, j - 1$ and $w_{\adat'}(i_j, c_{j, k}) = \frac{0.2}{j - 2}$ for all $k = 1, \dots, j - 2$. Therefore,
\begin{align*}
\|\adat - \adat'\|_1 &\geq \sum_{j = 4}^{p - 1} \sum_{k \in [j - 2]} |w_{\adat}(i_j, c_{j, k}) - w_{\adat'}(i_j, c_{j, k})| \\
&= \sum_{j=4}^{p - 1} \sum_{k \in [j - 2]} \left(\frac{0.2}{j - 2} - \frac{0.2}{j - 1}\right) \\
&\geq \enspace \sum_{j=4}^{p - 1} \frac{0.2}{j-1} 
\enspace \geq \enspace 0.2\ln\left(\frac{p - 1}{4}\right).
\end{align*}
By taking $p \to \infty$, we can see that this is not bounded above by $C_0 \cdot r$ for any constant $C_0$.
\end{proof}

\subsubsection{\textsl{User $\times$ Publisher $\times$ Advertiser} Relation}

We now consider the last adjacency relation: \textsl{User $\times$ Publisher $\times$ Advertiser}.

\paragraph{First-Touch Attribution.}
%e start with FTA, the only case that results in a valid configuration. 
FTA is the only valid configuration in this case.
The key observation here is that, although removing impressions for a scope may result in re-attribution of multiple conversions, these re-attribution is only to the first impression left for that advertiser. Thereby, an argument analogous to that in the proof of \Cref{thm:post-attr-impression-lta} can show the validity of the configuration.
\begin{restatable}{theorem}{postAttrPublisherAdFta}
\label{thm:post-attr-publisher-ad-fta}
For first-touch attribution, the \textsl{user $\times$ publisher $\times$ advertiser} adjacency relation with \budget enforced post-attribution constitutes a valid configuration with $C_0 = 2$.
\end{restatable}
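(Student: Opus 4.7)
The plan is to generalize the argument used for the FTA case with \textsl{impression} adjacency (Theorem~\ref{thm:post-attr-impression-fta}) to accommodate the fact that adjacent datasets now differ not in a single impression but in a whole set of impressions $I$ sharing one $(U, P, A)$ scope. The key structural property of FTA that makes this work is that, however many impressions are removed, the ``re-routing'' of attributions always lands on a \emph{single} new impression, namely the earliest surviving impression associated with $(U, A)$. In contrast, LTA re-routes each affected conversion to a potentially different impression, which is what breaks validity for LTA under this adjacency (cf.\ Theorem~\ref{thm:post-attr-publisher-ad-lta}).

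Concretely, I would start by fixing adjacent $\cD, \cD'$ and, by symmetry, assume $\cD' = \cD \setminus I$ where $I$ consists of impressions all associated with one user $U$, publisher $P$, and advertiser $A$. I would then single out, within $\cD$, the chronologically first impression $i_1$ associated with $(U, A)$, and within $\cD'$, the first such impression $\tilde{i}$ (if one exists; $\tilde{i}$ is then associated with some publisher $P' \neq P$ since $\tilde{i}\notin I$). The first dichotomy is whether $i_1 \in I$: if not, then $i_1=\tilde{i}$, the attribution for every conversion of $(U, A)$ is unchanged, attributions for other $(U',A')$ pairs are certainly unchanged, and $\adat = \adat'$.

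If $i_1 \in I$, I would argue that the only impressions whose per-conversion weights differ between $\adat$ and $\adat'$ are (a) those in $I$ and (b) $\tilde{i}$ itself. For (a), every weight drops to $0$ in $\adat'$, so their total contribution to $\|\adat - \adat'\|_1$ is exactly the total attribution weight assigned in $\adat$ to impressions of scope $(U, P, A)$, which post-attribution enforcement caps at $r$. For (b), in $\adat$ the impression $\tilde{i}$ receives weight $0$ (because $i_1$ precedes it and is also associated with $(U, A)$), whereas in $\adat'$ it inherits precisely those conversions of $(U, A)$ that come after $\tilde{i}$; by post-attribution enforcement on the scope $(U, P', A)$, the total added weight is at most $r$. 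Handling separately the degenerate subcase where $\tilde{i}$ does not exist (all impressions for $(U,A)$ lie in $I$, so the corresponding conversions become unattributed) is straightforward and contributes only to (a). Summing, $\|\adat - \adat'\|_1 \leq 2r$.

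The main conceptual step --- and the part that I expect requires the most care in writing --- is justifying that no impression other than $\tilde{i}$ (and the impressions in $I$) has any weight change. This boils down to the observation that FTA assigns nonzero weight only to the first impression of a given $(U,A)$-sequence, so any impression that was not the first for $(U, A)$ in $\cD$ and is not the first for $(U, A)$ in $\cD'$ stays at weight $0$ in both. Once this is cleanly established, the $2r$ bound follows by the two scope-wise applications of post-attribution enforcement described above, and the slight extra factor of $2$ compared with Theorem~\ref{thm:conv-valid} comes from charging one unit each to the departing scope $(U, P, A)$ and the receiving scope $(U, P', A)$.
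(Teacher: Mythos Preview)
Your proposal is correct and follows essentially the same approach as the paper's own proof: the same case split on whether the earliest $(U,A)$-impression lies in the removed set $I$, the same identification of the unique ``receiving'' impression $\tilde{i}$ (the paper calls it $\ti'$), and the same two applications of post-attribution enforcement---one on the departing scope $(U,P,A)$ and one on the receiving scope---to get the $2r$ bound. The only cosmetic difference is that you phrase the analysis as ``impressions in $I$ plus $\tilde{i}$'' while the paper phrases it via the two subcases IIa/IIb, but the content is identical.
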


\begin{proof}[Proof of \Cref{thm:post-attr-publisher-ad-fta}]
Consider two adjacent datasets $\cD, \cD'$ such that $\cD'$ results from removing impressions $\ti_1, \dots, \ti_a$ (ordered from oldest to newest) all associated to a single user $U$, a single publisher $P$ and a single advertiser $A$. Let $C_A$ denote the set of conversions at the advertiser $A$ and the user $U$.

We consider two cases, based on whether $\ti_1$ is the first impression w.r.t. its advertiser $A$ and the user $U$.
\begin{itemize}
\item Case I: $\ti_1$ is \emph{not} the first impression w.r.t. the advertiser $A$ and the user $U$. In this case, the two attributed datasets $\adat, \adat'$ are exactly the same, because all conversions in $C_A$ are attributed to the first impression w.r.t. the advertiser $A$ and the user $U$ (which is not among $\ti_1, \dots, \ti_a$).
\item Case II: $\ti_1$ is the first impression w.r.t. the advertiser $A$ and the user $U$. In this case, all conversions $c_j \in C_A$ are attributed to $\ti_1$. These are the only conversions whose attributions change between $\cD$ and $\cD'$.

To analyze this change, consider two subcases, whether $\ti_1, \dots, \ti_a$ are the only impressions in $\cD$ from the advertiser $A$ and the user $U$.
\begin{itemize}
\item Case IIa: $\ti_1, \dots, \ti_a$ are the only impressions in $\cD$ from the advertiser $A$ and the user $U$. In this case, all conversions in $C_A$ become unattributed in $\cD'$. Thus, $\|\adat - \adat'\|_1 = \sum_{k \in [a]} \sum_{c_j} w_{\adat}(\ti_k, c_j) \leq r,$
where the inequality follows from the post-attribution \budget enforcement for the contribution bounding scope $(U, P, A)$.
\item Case IIb: $\ti_1, \dots, \ti_a$ are not the only impressions in $\cD$ from the advertiser $A$ and the user $U$. Let $\ti'$ be the first impression w.r.t. the advertiser $A$ and the user $U$ that is not among $\ti_1, \dots, \ti_a$. In this case, all conversions in $C_A$ are attributed to $\ti'$ in $\cD'$. 
Furthermore, no conversions are attributed to $\ti'$ (or other impressions in $\ti'$'s contribution bounding scope) in $\cD$ because $\ti_1$ comes first for the same advertiser $A$ and the same user $U$. 
Therefore, we have
\begin{align*}
\|\adat - \adat'\|_1 = \sum_{k \in [a]} \sum_{c_j} w_{\adat}(\ti_k, c_j) +  \sum_{c_j} w_{\adat'}(\ti', c_j) \leq 2r,
\end{align*}
where the inequality again follows from the post-attribution \budget enforcement for the contribution bounding scope $(U, P, A)$ and the contribution bounding scope of $\ti'$.
\end{itemize}
\end{itemize}
In all cases, we can conclude that $\|\adat - \adat'\|_1 \leq 2r$, which means that this is a valid configuration with $C_0 = 2$ as desired.
\end{proof}

\paragraph{Uniform Multi-Touch, Exponential Time Decay and U-Shaped Attributions.}
%Finally, we consider uniform multi-touch, exponential time decay attribution and U-shaped attribution rules. 
The invalidity of these attribution rules (\Cref{cor:post-attr-publisher-ad-uni-exp-ushaped}) follows directly from that of the \textsl{impression} case.
% (Full proof is deferred to appendix.)
%Their invalidity simply follows from the construction for the per-impression adjacency relation (proof of \Cref{thm:post-attr-impression-uni}) by noting that, in the construction, each impression corresponds to a different (publisher, advertiser). Therefore, the datasets $\cD, \cD'$ selected there are also adjacent under the (user $\times$ publisher $\times$ advertiser) relation.

\begin{restatable}{corollary}{postAttrPublisherAdUni}
\label{cor:post-attr-publisher-ad-uni-exp-ushaped}
For uniform multi-touch attribution, exponential time decay attribution and U-shaped attribution, the \textsl{user $\times$ publisher $\times$ advertiser} adjacency relation with \budget enforced post-attribution constitutes an invalid configuration.
\end{restatable}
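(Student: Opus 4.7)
The plan is to reduce the claim directly to the impression-level invalidity results already established in Theorem~\ref{thm:post-attr-impression-uni}, Corollary~\ref{cor:post-attr-impression-exp}, and Theorem~\ref{thm:post-attr-impression-ushaped}. The key observation is that in each of those proofs, the two adjacent datasets $\cD$ and $\cD'$ differ by a single impression $i_1$, and every impression in the construction is associated with the same advertiser and the same user. I would modify the construction by additionally assigning each impression $i_k$ to its own \emph{distinct} publisher $P_k$, keeping all other aspects (timestamps, conversions, advertiser, user) identical.

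Under this modification, each \textsl{user $\times$ publisher $\times$ advertiser} contribution bounding scope contains at most one impression. In particular, the scope $(U, P_1, A)$ of the removed impression $i_1$ consists of only $i_1$ itself, so $\cD$ and $\cD'$ are adjacent under the \textsl{user $\times$ publisher $\times$ advertiser} relation as well. Moreover, with a contribution bound of $r = 1$ enforced at the (user, publisher, advertiser) scope, the enforcement step coincides exactly with per-impression enforcement on this construction, since each scope carries exactly one impression. Inspecting the three impression-level proofs, the total weight assigned to each impression before enforcement is at most $1$ (e.g., $j \cdot (1/j) = 1$ in the UNI construction), so no clipping occurs under either enforcement policy, and the lower bound $\|\adat - \adat'\|_1 \geq \Omega(\log p)$ derived in each impression-level proof transfers verbatim.

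Since $p$ may be taken arbitrarily large, no absolute constant $C_0$ can bound $\|\adat - \adat'\|_1 / r$, which establishes invalidity. The only subtlety to check is that UNI, EXP, and U-S attribution depend only on the ordered sequence of impressions (and, for EXP, the conversion timestamp), and not on publisher identity; this is immediate from their definitions in Section~\ref{subsubsec:attr_logic}, so the attribution weights computed in the original constructions are unaffected by the new publisher assignment. I do not expect any substantial obstacle beyond this bookkeeping, which is why the paper remarks that the corollary follows ``directly'' from the impression-level case.
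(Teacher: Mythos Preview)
Your proposal is correct and follows exactly the paper's approach: reuse the impression-level constructions from Theorem~\ref{thm:post-attr-impression-uni}, Corollary~\ref{cor:post-attr-impression-exp}, and Theorem~\ref{thm:post-attr-impression-ushaped}, but assign each impression $i_k$ to a distinct publisher $P_k$, so that each \textsl{user $\times$ publisher $\times$ advertiser} scope contains a single impression and the enforcement coincides with per-impression enforcement.

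One small correction: your claim that ``no clipping occurs under either enforcement policy'' is not accurate --- in the UNI construction, for instance, impression $i_j$ receives pre-enforcement weight $\sum_{k=j}^{p} k\cdot(1/k) = p-j+1$ from all conversion batches $c_{k,*}$ with $k\geq j$, so clipping does occur for the later batches. This is harmless, though, since your earlier (and sufficient) observation already establishes that the two enforcement policies act identically on this construction, so the attributed datasets $\adat,\adat'$ are the same in both settings and the $\Omega(\log p)$ lower bound transfers regardless.
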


\begin{proof}[Proof of \Cref{cor:post-attr-publisher-ad-uni-exp-ushaped}]
We may use the same constructions as in the proof of \Cref{thm:post-attr-impression-uni}, \Cref{cor:post-attr-impression-exp}, and \Cref{thm:post-attr-impression-ushaped} respectively,  except we assign each $i_1, \dots, i_p$ to $p$ different publishers. This way $\cD, \cD'$ are adjacent under the \textsl{user $\times$ publisher $\times$ advertiser} relation. The remainder of the proof then ensures that $\|\adat - \adat'\|_1$ is not bounded above by $C_0 \cdot r$ for any constant $C_0$.
\end{proof}

\end{document}